\documentclass[conference,10pt,letterpaper]{IEEEtran}

\usepackage{amsmath,amssymb}
\usepackage{xifthen}
\usepackage{mathtools}
\usepackage{enumerate}
\usepackage{microtype}
\usepackage{xspace}
\usepackage{bm}
\usepackage[T1]{fontenc}
\usepackage{fancyhdr}
\usepackage{lastpage}
\usepackage{bbm}

\newcommand{\mbs}[1]{\bm{#1}}
\newcommand{\vect}[1]{{\lowercase{\mbs{#1}}}}
\newcommand{\mat}[1]{{\uppercase{\mbs{#1}}}}

\newcommand{\T}{{\scriptscriptstyle\mathsf{T}}}

\newcommand{\cond}{\,\vert\,}
\renewcommand{\Re}[1][]{\ifthenelse{\isempty{#1}}{\operatorname{Re}}{\operatorname{Re}\left(#1\right)}}
\renewcommand{\Im}[1][]{\ifthenelse{\isempty{#1}}{\operatorname{Im}}{\operatorname{Im}\left(#1\right)}}

\newcommand{\cv}{\vect{c}}

\newcommand{\rv}{\vect{r}}

\newcommand{\yv}{\vect{y}}

\newcommand{\muv}{\vect{\mu}}



\newcommand{\Ac}{{\mathcal A}}

\newcommand{\Cc}{{\mathcal C}}

\newcommand{\Hc}{{\mathcal H}}

\newcommand{\Nc}{{\mathcal N}}

\newcommand{\Tc}{{\mathcal T}}
\newcommand{\Uc}{{\mathcal U}}
\newcommand{\Wc}{{\mathcal W}}

\newcommand{\Xc}{{\mathcal X}}
\newcommand{\Yc}{{\mathcal Y}}

\newcommand{\CC}{\mathbb{C}}

\newcommand{\Id}{\mat{\mathrm{I}}}

\newcommand{\CN}[1][]{\ifthenelse{\isempty{#1}}{\mathcal{N}_{\mathbb{C}}}{\mathcal{N}_{\mathbb{C}}\left(#1\right)}}

\renewcommand{\P}[1][]{\ifthenelse{\isempty{#1}}{\mathbb{P}}{\mathbb{P}\left[{#1}\right]}}
\newcommand{\E}[1][]{\ifthenelse{\isempty{#1}}{\mathbb{E}}{\mathbb{E}\left[#1\right]}}
\newcommand{\I}[1][]{\ifthenelse{\isempty{#1}}{\mathbb{I}}{\mathbb{I}\left\{#1\right\}}}
\renewcommand{\det}[1][]{\ifthenelse{\isempty{#1}}{\mathrm{det}}{\mathrm{det}\left(#1\right)}}
\newcommand{\trace}[1][]{\ifthenelse{\isempty{#1}}{{\rm tr}}{\mathrm{tr}\left(#1\right)}}
\newcommand{\rank}[1][]{\ifthenelse{\isempty{#1}}{\mathrm{rank}}{\mathrm{rank}\left(#1\right)}}
\newcommand{\diag}[1][]{\ifthenelse{\isempty{#1}}{\mathrm{diag}}{\mathrm{diag}\left(#1\right)}}
\newcommand{\Cov}[1][]{\ifthenelse{\isempty{#1}}{\mathsf{Cov}}{\mathsf{Cov}\left(#1\right)}}


\newcommand{\defeq}{\triangleq}

\newtheorem{proposition}{Proposition}
\newtheorem{definition}{Definition}
\newtheorem{theorem}{Theorem}
\newtheorem{corollary}{Corollary}
\newtheorem{lemma}{Lemma}


\newcounter{enumi_saved}
\setcounter{enumi_saved}{0}

\usepackage{answers}
\Newassociation{solution}{Solution}{solutionfile}

\AtBeginDocument{\Opensolutionfile{solutionfile}[\jobname]}
\AtEndDocument{\Closesolutionfile{solutionfile}\clearpage
}


\IfFileExists{MinionPro.sty}{
}{
}


\usepackage{subfigure}
\usepackage{multirow}
\usepackage{array}
\usepackage{amsmath}
\usepackage[table,dvipsnames]{xcolor}
\usepackage{enumitem}
\usepackage{stfloats}
\usepackage{cite}
\usepackage[toc,acronym,nonumberlist,nopostdot]{glossaries}
\makeindex
\makenoidxglossaries
\usepackage{array,makecell,booktabs}
\usepackage{tabstackengine}
\newcolumntype{M}[1]{>{\centering\arraybackslash}m{#1}}
\usepackage{multicol,multirow,pgf}
\usepackage{tikz,pgfplots}
\usetikzlibrary{shapes}
\usetikzlibrary{plotmarks}
\usetikzlibrary{spy}
\usepackage{pgfplots,pgf}
\pgfplotsset{minor grid style={dotted,gray!40}}
\pgfplotsset{major grid style={dashed,gray!40}}

\allowdisplaybreaks

\renewcommand{\rv}[1]{{\mathsf{#1}}}
\newcommand{\rvVec}[1]{{\bm{\mathsf{#1}}}}

\newcommand{\ind}[1]{{\mathbbm{1}{\{#1\}}}}

\renewcommand{\defeq}{=}

\renewcommand{\Id}{\mat{I}}

\newcommand{\EbNo}{{E_{\rm b}/N_0}}

\newcommand{\md}{\frac{|\Wc_0|}{|\Wc|}}
\newcommand{\fa}{\frac{|\Wc_0'|}{|\widehat{\Wc}|} }

\newacronym{MAC}{MAC}{multiple access channel}
\newacronym{RAC}{MAC}{random-access channel}
\newacronym{UMRA}{UMRA}{unsourced massive random-access}
\newacronym{SIMO}{SIMO}{single-input multiple-output}
\newacronym{SISO}{SISO}{single-input single-output}
\newacronym{iid}{i.i.d.}{independent and identically distributed}
\newacronym{ML}{ML}{maximum likelihood}
\newacronym{PEP}{PEP}{pair-wise error probability}
\newacronym{LLR}{LLR}{log-likelihood ratio}
\newacronym{SNR}{SNR}{signal-to-noise ratio}
\newacronym{RCB}{RCB}{random-coding bound}
\newacronym{MD}{MD}{misdetection}
\newacronym{FA}{FA}{false alarm}
\newacronym{PMF}{PMF}{probability mass function}
\newacronym{wlog}{w.l.o.g.}{without loss of generality}
\newacronym{wrt}{w.r.t.}{with respect to}
\newacronym{SA}{SA}{slotted {ALOHA}}
\newacronym{DoF}{DoF}{degrees of freedom}
\newacronym{rDoF}{rDoF}{real degrees of freedom}
\newacronym{ROC}{ROC}{receiver operating characteristic}
\newacronym{IoT}{IoT}{Internet of Things}
\newacronym{TIN}{TIN}{treating-interference-as-noise}
\newacronym{PDF}{PDF}{probability density function}

\title{Massive Uncoordinated Access \\With Random User Activity} 

\author{\IEEEauthorblockN{Khac-Hoang Ngo, Alejandro Lancho, Giuseppe Durisi, and Alexandre Graell i Amat} 
	\IEEEauthorblockA{Department of Electrical Engineering, Chalmers University of Technology, 41296 Gothenburg, Sweden}
}


\newcommand{\revise}[1]{#1} 
\newcommand{\revisee}[1]{#1} 

\begin{document}
	
\maketitle
\date{\today}
\begin{abstract}
	We extend the seminal work by Polyanskiy (2017) on massive uncoordinated access to the case where the number of active users is random and unknown {\em a priori}. We define a random-access code accounting for both misdetection~(MD) and false-alarm~(FA), and derive a random-coding achievability bound for the Gaussian multiple access channel. Our bound captures the fundamental trade-off between MD and FA probabilities. 
	\revise{It} suggests that 
	lack of knowledge of the number of active users entails a small penalty in power efficiency. \revise{For a typical scenario, to achieve both MD and FA probabilities below $0.1$, the required  energy per bit \revisee{predicted by} our bound is $0.5$--$0.7$ dB higher than that \revisee{predicted by} the bound in Polyanskiy (2017) 
		for a known number of active users.} Taking both MD and FA into account, we \revisee{use our bound to benchmark the energy efficiency of some recently proposed massive random access schemes.}
\end{abstract}


\section{Introduction} \label{sec:intro}
Under the paradigm of the \gls{IoT}, the number of connecting devices is increasing dramatically. IoT devices are mostly battery limited and transmit short packets in a sporadic and uncoordinated manner~\cite{Chen2020_massiveAccess,Wu2020_massiveAccess}. This calls for new theoretical frameworks that help to understand the fundamental limits of massive random access and provide guidelines for system design. 
Polyanskiy~\cite{PolyanskiyISIT2017massive_random_access} proposed a novel formulation for the massive uncoordinated access problem with three key assumptions: i) all users employ a common codebook and the decoder only aims to return a list of messages without recovering users' identities; ii) the error event is defined per user 
and the error probability is averaged over the users; iii) each user sends a fixed amount of information bits within a finite frame length. Under this formulation, traditional as well as novel random access protocols \cite{Berioli2016NOW} yield achievability bounds. In \cite{PolyanskiyISIT2017massive_random_access}, an achievability bound for the Gaussian \gls{MAC} was derived and it was shown that modern random access schemes exhibit a large gap to this bound. This gap has been later reduced in, e.g., \cite{Ordentlich2017low_complexity_random_access,Vem2019,Fengler2019sparcs,Amalladinne2020unsourced,Amalladinne2020,Pradhan2020}. Polyanskiy's framework has been extended to the quasi-static fading channel~\cite{Kowshik2020}, multiple-antenna channel~\cite{Fengler2019nonBayesian}, \revisee{and a setup with common alarm messages~\cite{Stern2019}.}

In Polyanskiy's achievability bound, the number of active users is fixed and known to the receiver, an assumption that has practical shortcomings. Since \gls{IoT} devices access the channel at random times and in a grant-free manner, the number of active users varies over time, and hence, it is typically unknown to the receiver. Therefore, the bound in \cite{PolyanskiyISIT2017massive_random_access} may be an overoptimistic benchmark for random-access schemes that are designed to work with unknown number of active users. Moreover, when the number of active users is unknown, the decoder needs to determine the list size. Choosing a list size smaller than the number of active users will result in \glspl{MD}\textemdash i.e., transmitted messages that are not included in the decoded list\textemdash whereas choosing it larger than the number of active users will result in \glspl{FA}\textemdash i.e., decoded messages that were not transmitted. Furthermore, additional \glspl{MD} and \glspl{FA} may occur in the decoding process. 
There is a trade-off between \gls{MD} and \gls{FA} probabilities. A decoder that always outputs the whole codebook will never misdetect, but has \gls{FA} probability close to one; similarly, a decoder that always outputs an empty set will never raise a \gls{FA} but always misdetects. 
Characterizing the \gls{MD}--\gls{FA} trade-off is a fundamental engineering challenge that was not addressed in \cite{PolyanskiyISIT2017massive_random_access}. An achievability bound for the Gaussian \gls{MAC} with unknown number of active users was presented in \cite{Effros2018ISIT}. However, the authors consider the joint-user error event instead of the per-user error event, and thus, \gls{MD} and \gls{FA} are not explicitly considered. In short, a random-coding bound accounting for both \gls{MD} and \gls{FA}, which can serve as a benchmark for common-codebook massive uncoordinated random access with random user activity, is still missing.


Most of the practical algorithms that have been proposed for common-codebook massive random access require knowledge of the number of active users. 
Advanced ALOHA schemes, such as irregular repetition slotted ALOHA~(IRSA)~\cite{Liva2011IRSA}, 
can also operate when the number of active users is unknown. However, research on modern random access protocols~\cite{Berioli2016NOW}, such as IRSA, has traditionally focused on characterizing and minimizing the packet loss rate, which accounts only for \gls{MD}. The scheme proposed in \cite{Vem2019} also addressed \gls{MD} only.  Minimizing the \gls{MD} probability alone can entail a high \gls{FA} probability. In~\cite{Decurninge2020}, a tensor-based communication scheme was proposed, \revise{and both \gls{MD} and \gls{FA} probabilities are reported in the performance evaluation. Another scheme \revisee{for which} both \gls{MD} and \gls{FA} probabilities \revisee{are reported} was recently proposed in~\cite{fengler2020pilot} for the quasi-static fading \gls{MAC} \revisee{and for the case in which} the receiver has a large number of antennas.}


In this work, we extend Polyanskiy's bound to the case where the number of active users is {\em random} and {\em unknown}. To this end, we first extend the definition of a random-access code provided in~\cite{PolyanskiyISIT2017massive_random_access} to account for both \gls{MD} and \gls{FA} probabilities. Then, we derive a random-coding bound for the Gaussian \gls{MAC}. Unlike the scheme in~\cite{PolyanskiyISIT2017massive_random_access}, our decoder does not assume knowledge of the number of active users, and thus cannot use it to set the decoded list size. Instead, we let our decoder decide the best list size within a predetermined interval around an estimated value of the number of active users. \revisee{Our decoding metric is similar to that used in \cite{Stern2019}. However, different from \cite{Stern2019}, we limit the decoded list size to be in an interval 
	to avoid overfitting.} 

\revisee{Compared with the bound in \cite{PolyanskiyISIT2017massive_random_access}}, our bound suggests that  the lack of knowledge of the number of active users entails a small penalty in power efficiency. Furthermore, \revise{we apply our bound to \revisee{characterize} \gls{MD} and \gls{FA} in slotted ALOHA with multi-packet reception (SA-MPR). Using our bound, we \revisee{benchmark the energy efficiency of} SA-MPR and \revisee{of the} massive random access schemes \revisee{proposed in \cite{Fengler2019sparcs,Amalladinne2020unsourced}}.}
For instance, for a system with $\revise{300}$ active users in average, to achieve both \gls{MD} and \gls{FA} probabilities below $10^{-1}$, the required energy per bit \revisee{predicted by} our achievability bound is \revise{$0.65$~dB higher than that \revisee{predicted by} the bound for a known number of active users \cite{PolyanskiyISIT2017massive_random_access}. In the same setting, the required energy per bit predicted by our bound is $9$~dB, $4.1$~dB, and $3.6$~dB lower than that of the SA-MPR bound, the scheme based on sparse regression code (SPARC)~\cite{Fengler2019sparcs}, and an enhancement of SPARC~\cite{Amalladinne2020unsourced}, respectively.}


\subsubsection*{Notation}
Random quantities are denoted with non-italic letters with sans-serif font, e.g., a scalar $\rv{x}$ and a vector $\rvVec{v}$. 
Deterministic quantities are denoted 
with italic letters, e.g., a scalar $x$ and a vector $\bm{v}$. 
The Euclidean norm is denoted by $\|\cdot\|$. 
We use $\mathfrak{P}(\Ac)$ to denote the set of all subsets of $\Ac$; $[n]$ denotes the set of integers $\{1,\dots,n\}$ if $n \ge 1$ and $[n] \defeq \emptyset$ if $n=0$; $[m:n] \defeq \{m,m+1,\dots,n\}$ if $m \le n$ and $[m:n] \defeq \emptyset$ if $m>n$; $x^+ \defeq \max\{x,0\}$; $\ind{\cdot}$ is the indicator function. The set of natural and complex numbers are denoted by $\mathbb{N}$ and $\CC$, respectively. We denote the Gamma function by $\Gamma(x) \defeq \int_{0}^{\infty}z^{x-1}e^{-z}dz$, and the lower and upper incomplete Gamma functions by $\gamma(x,y) \defeq \int_{0}^{y}z^{x-1}e^{-z}dz$ and $\Gamma(x,y) \defeq \int_{y}^{\infty}z^{x-1}e^{-z}dz$, respectively. 

\section{Random-Access Channel} \label{sec:channel}
We consider a \gls{MAC} in which a random set of $\rv{K}_{\rm a}$ users transmit their messages to a receiver over $n$ uses of a stationary memoryless channel. Let $\rv{x}_k \in \Xc$ be the transmitted signal of user $k$ in a channel use. 
Given $\rv{K}_{\rm a} = K_{\rm a}$, the channel law is given by $P_{\rv{y} \cond \rv{x}_1,\dots,\rv{x}_{{K}_{\rm a}}}$. Thus this random-access channel is characterized by the \gls{PMF} $P_{\rv{K}_{\rm a}}$ of $\rv{K}_{\rm a}$ and by the set of conditional probabilities $\{P_{\rv{y} \cond \rv{x}_1,\dots,\rv{x}_{{K}_{\rm a}}} \colon \Xc^{K_{\rm a}} \to \Yc \}_{K_{\rm a} \in \mathbb{N}}$. 
As in~\cite{PolyanskiyISIT2017massive_random_access}, we assume that the channel law is permutation invariant. 
We further assume that the receiver does not know the realizations of $\rv{K}_{\rm a}$.  

As in~\cite{PolyanskiyISIT2017massive_random_access}, our model differs from the classical \gls{MAC} in that the total number of users is not limited, all users employ the same codebook, and the receiver decodes up to a permutation of messages. 
However, as opposed to~\cite{PolyanskiyISIT2017massive_random_access}, where the number of active users is assumed to be fixed and known, we assume that $\rv{K}_{\rm a}$ is random and unknown. 
We therefore need to account for both \gls{MD} and \gls{FA}.
We next rigorously define the \gls{MD} and \gls{FA} probabilities, as well as the notion of a random-access code.

\begin{definition}[Random-access code] \label{def:code}
	Consider a random-access channel characterized by $\big\{P_{\rv{K}_{\rm a}}, \{P_{\rv{y} \cond \rv{x}_1,\dots,\rv{x}_{{K}_{\rm a}}}\}_{K_{\rm a} \in \mathbb{N}}\big\}$. \revisee{An $(M,n,\epsilon_{\rm MD},\epsilon_{\rm FA})$ random-access code for this channel, where $M$ and $n$ are positive integers and $\epsilon_{\rm MD},\epsilon_{\rm FA} \in (0,1)$, consists of:}
	\begin{itemize}[leftmargin=*]
		\item \revisee{A random variable $\rv{U}$ defined on a set $\Uc$ 
			that is revealed to both the transmitter and the receiver before the start of the transmission. This random variable acts as common randomness and allows for the use of randomized coding strategies.}
		
		\item \revisee{An encoder mapping $f\colon \Uc \times [M] \to \Xc^n$ defining the transmitted codeword $\rvVec{x}_i = f(\rv{U},\rv{w}_i)$ of user $i$ for a given message $\rv{w}_i$, which is assumed to be uniformly distributed over $[M]$.}
		
		\item \revisee{A decoding function $g\colon \Uc \times \Yc^n \to \mathfrak{P}([M])$ providing an estimate $\widehat{\Wc} = \{\hat{\rv{w}}_1,\dots,\hat{\rv{w}}_{|\widehat{\Wc}|}\} = g(\rv{U},\rvVec{y})$ of the list of transmitted messages, where $\rvVec{y} = [\rv{y}(1) \dots \rv{y}(n)]^\T$ denotes the channel output sequence.} 
	\end{itemize}
	Let $\widetilde{\Wc} = \{\widetilde{\rv{w}}_1,\dots,\widetilde{\rv{w}}_{|\widetilde{\Wc}|}\}$ denotes the set of distinct elements of $\{{\rv{w}}_1,\dots,{\rv{w}}_{\rv{K}_{\rm a}}\}$. \revisee{We assume that the decoding function satisfies the following constraints on the \gls{MD} and \gls{FA} probabilities:
		\begin{align}
			\!\!\!P_{\rm MD} &\defeq \E\Bigg[{\ind{|\widetilde{\Wc}| \ne 0} \cdot \frac{1}{|\widetilde{\Wc}|} \sum_{i=1}^{|\widetilde{\Wc}|} \P[\widetilde{\rv{w}}_i \!\notin\! \widehat{\Wc}]}\!\Bigg] \!\le \epsilon_{\rm MD}, \label{eq:def_pMD}\\
			\!\!\!P_{\rm FA} &\defeq \E\Bigg[{\ind{|\widehat{\Wc}| \ne 0} \cdot \frac{1}{|\widehat{\Wc}|} \sum_{i=1}^{|\widehat{\Wc}|} \P[\hat{\rv{w}}_i \notin {\Wc}]}\Bigg] \!\le \epsilon_{\rm FA}, \label{eq:def_pFA}
		\end{align}
		The expectations in \eqref{eq:def_pMD} and \eqref{eq:def_pFA} are with respect to the size of $\Wc$ and $\widehat{\Wc}$, respectively.} 
\end{definition} 	
	
		In the random-access code defined in~\cite{PolyanskiyISIT2017massive_random_access}, the decoder outputs a list of messages of size equal to the number of active users, which is assumed to be known. In such a setup, a \gls{MD} implies a \gls{FA}, and vice versa. Hence, the two types of errors become indistinguishable. In our setup, the decoded list size can be different from the number of transmitted messages. Hence, we introduce explicitly the \gls{MD} and \gls{FA} probabilities. This allows us to characterize the \gls{MD}--\gls{FA} trade-off. 

Hereafter, we consider the Gaussian \gls{MAC} with $\{P_{\rv{y}|\rv{x}_1,\dots,\rv{x}_{{K}_{\rm a}}}\}$ imposed by
$
	\rvVec{y} = \sum_{i=1}^{\rv{K}_{\rm a}}\rvVec{x}_i + \rvVec{z}, 
$
where $\{\rvVec{x}_i\}_{i=1}^{\rv{K}_{\rm a}}$ are the transmitted signals over $n$ channel uses and $\rvVec{z} \sim \Cc\Nc(\mathbf{0},\Id_n)$ is the Gaussian noise 
independent of $\{\rvVec{x}_i\}_{i=1}^{\rv{K}_{\rm a}}$.
We consider the power constraint $\|\rvVec{x}_i\|^2 \le nP, \forall i \in [\rv{K}_{\rm a}]$. 

\section{Random-Coding Bound} \label{sec:RCU}
The random-coding bound in~\cite[Th.~1]{PolyanskiyISIT2017massive_random_access} 
is derived by constructing a random-coding scheme as follows. Let $\Wc = \{\rv{w}_1, \dots, \rv{w}_{K_{\rm a}}\} \subset [M]$ be the set of transmitted messages. Each active user picks randomly a codeword $\cv_{\rv{w}_i}$ from a common codebook containing $M$ codewords $\cv_1,\dots,\cv_M$ drawn independently from the distribution $\Cc\Nc(\mathbf{0},P'\Id_n)$ for a fixed $P' < P$.  To convey message $\rv{w}_i$, user $i$ transmits $\cv_{\rv{w}_i}$ provided that $\|\cv_{\rv{w}_i}\|^2 \le nP$. Otherwise, it transmits the all-zero codeword.
The receiver employs a minimum distance decoder where the decoded list is $\widehat{\Wc} = \arg\min_{\widehat{\Wc} \subset [M], |\widehat{\Wc}| = K_{\rm a}} \|c(\widehat{\Wc}) - \rvVec{y}\|^2$, with $c(\Wc) \defeq \sum_{i\in \Wc} \cv_{i}$. The error analysis involves manipulations of unions of the pairwise error events via a change of measure and the application of the Chernoff bound combined with Gallager's $\rho$-trick~\cite[p.~136]{Gallager1968information}. 
An alternative bound is also obtained by writing the pairwise error event as an inequality involving information densities, and by applying a property of the information density given in~\cite[Cor.~17.1]{Polyanskiy2019lecture}.

In the following, we derive a similar random-coding bound for the case in which $\rv{K}_{\rm a}$ is random and unknown to the receiver. Specifically, we consider a random-coding scheme with the same encoder as in \cite{PolyanskiyISIT2017massive_random_access}. However, since the receiver does not know $\rv{K}_{\rm a}$, we let the decoder estimate $\rv{K}_{\rm a}$ from $\rvVec{y}$, then decide the best list size within an interval around the initially detected value of $\rv{K}_{\rm a}$. Specifically, given the channel output $\yv$, the receiver estimates $\rv{K}_{\rm a}$ as
\begin{align}
	K_{\rm a}' = \arg\min_{K \in [K_l:K_u]} m(\yv,K),
\end{align}
where $m(\yv,K)$ is a suitably chosen metric, and $K_l$ and $K_u$ are suitably chosen lower and upper limits on $K_{\rm a}'$, respectively. 
Then, given $K_{\rm a}'$, the receiver decodes the list of messages as
\begin{equation} \label{eq:decoder_Ka'}
	\widehat{\Wc} = \arg\min_{\widehat{\Wc} \subset [M], \underline{K_{\rm a}'} \le |\widehat{\Wc}| \le \overline{K_{\rm a}'}} \|c(\widehat{\Wc}) - \rvVec{y}\|^2,
\end{equation}   
		where \revisee{$\underline{K_{\rm a}'} =\max\{K_l,K_{\rm a}'-r\}$ and $\overline{K_{\rm a}'}\defeq \min\{K_u,K_{\rm a}'+r\}$ with a chosen nonnegative integer $r$}. 
	\revisee{We refer to $r$ as the {\em decoding radius}.}
An error analysis of this random-coding scheme conducted along similar lines as in \cite{PolyanskiyISIT2017massive_random_access} leads to the following result.

\begin{theorem}[Random-coding bound, $\rv{K}_{\rm a}$ random and unknown]  \label{thm:RCU_unknownKa}
	Fix $P' < P$, \revise{$r$, $K_{l}$, and $K_{u}$ ($K_{l} \le K_{u}$)}. For the $\rv{K}_{\rm a}$-user Gaussian \gls{MAC} with $\rv{K}_{\rm a} \sim P_{\rv{K}_{\rm a}}$, there exists an $(M,n,\epsilon_{\rm MD},\epsilon_{\rm FA})$ random-access code satisfying the power constraint $P$ and 
	\begin{align}
		\epsilon_{\rm MD} &= \sum_{K_{\rm a} =\max\{K_{l},1\}}^{K_{u}} \bigg(P_{\rv{K}_{\rm a}}(K_{\rm a}) \sum_{K_{\rm a}' = K_{l}}^{K_{u}} \sum_{t\in \Tc}\frac{t+(K_{\rm a}-\overline{K_{\rm a}'})^+}{K_{\rm a}} \notag \\
		&\qquad \cdot\min\{p_t,q_t, \xi(K_{\rm a},K_{\rm a}')\} \bigg) + p_0, \label{eq:eps_MD}\\
		\epsilon_{\rm FA} &= \sum_{K_{\rm a} =K_{l}}^{K_{u}} \bigg(P_{\rv{K}_{\rm a}}(K_{\rm a}) \sum_{K_{\rm a}' = K_{l}}^{K_{u}} \sum_{t\in \Tc} \sum_{t' \in \Tc_t}  \notag \\ 
		&\qquad \frac{t'+(\underline{K_{\rm a}'}-K_{\rm a})^+}{K_{\rm a} - t - {(K_{\rm a} - \overline{K_{\rm a}'})}^+ + t' + {(\underline{K_{\rm a}'}-K_{\rm a})}^+} \notag \\ 
		&\qquad \cdot \min\{p_{t,t'}, q_{t,t'}, \xi(K_{\rm a},K_{\rm a}')\} \bigg) + p_0, \label{eq:eps_FA}
	\end{align}	
	where 
	\begin{align}
		p_0 &= 2 - \sum_{K_{\rm a} = K_{l}}^{K_{u}}P_{\rv{K}_{\rm a}}(K_{\rm a}) - \E_{\rv{K}_{\rm a}}\left[\frac{M!}{M^{\rv{K}_{\rm a}}(M-\rv{K}_{\rm a})!} \right] \notag \\
		&\quad + \E[\rv{K}_{\rm a}]  \frac{\Gamma(n,nP/P')}{\Gamma(n)}, \label{eq:p0}\\
		p_t &= \sum_{t'\in \overline{\Tc}_t} p_{t,t'}, \label{eq:pt}\\
		p_{t,t'} &= e^{-n E(t,t')}, \label{eq:ptt} \\
		E(t,t') &= \max_{\rho,\rho_1 \in [0,1]} -\rho\rho_1 t' R_1 - \rho_1 R_2 + E_0(\rho,\rho_1), \label{eq:Ett} \\
		\!\!\!\! E_0(\rho,\rho_1) &= \max_{\lambda} \rho_1 a + \ln(1-\rho_1 P_2 b), \label{eq:E0}\\
		a &= \rho \ln(1+ P' t' \lambda) + \ln(1+ P't \mu), \label{eq:a}\\ 
		b &= \rho\lambda -\frac{\mu}{1+ P't\mu}, \label{eq:b} \\ 
		\mu &= \frac{\rho \lambda}{1+P't'\lambda}, \\
		P_2 &= 1+ \big((K_{\rm a} - \overline{K_{\rm a}'})^+ + (\underline{K_{\rm a}'} - K_{\rm a})^+\big)P', \label{eq:P2}\\
		R_1 &=  \frac{1}{nt'} \ln\binom{M - \max\{K_{\rm a},\underline{K_{\rm a}'}\}}{t'}, \label{eq:R1}
		\\
		R_2 &=  \frac{1}{n} \ln \binom{\min\{K_{\rm a}, \overline{K_{\rm a}'}\}}{t}, \\
		q_t &= \inf_{\gamma} \bigg(\!\P[\rv{I}_{t} \!\le\! \gamma] + \sum_{t'\in \overline{\Tc}_t}\!
		\exp(n(t'R_1 \!+\! R_2) \!-\! \gamma)\!\bigg), \label{eq:qt}\\
		q_{t,t'} &= \inf_{\gamma} \Big(\P[\rv{I}_{t} \!\le\! \gamma] + \exp(n(t'R_1 \!+\! R_2) \!-\! \gamma)\Big), \label{eq:qtt} \\
		\Tc &= [0:\min\{\overline{K_{\rm a}'},K_{\rm a},M\!-\!\underline{K_{\rm a}'} \!-\! (K_{\rm a} \!-\! \overline{K_{\rm a}'})^+\}], \label{eq:T} \\
		\Tc_t &= \big[\big({(K_{\rm a} - \overline{K_{\rm a}'})}^+ - {(\underline{K_{\rm a}'} - K_{\rm a})}^+ + \max\{\underline{K_{\rm a}'},1\} \big. \big. \notag \\
		&\quad \quad \big. \big. - K_{\rm a} + t\big)^+ : u_t\big],	\label{eq:Tt}	\\
		\overline{\Tc}_t &= \big[\big({(K_{\rm a} \!-\! \overline{K_{\rm a}'})}^+ - {(K_{\rm a}\!-\!\underline{K_{\rm a}'})}^+ + t\big)^+ : u_t \big], \label{eq:Tbart} \\
		u_t &= \min\big\{{(\overline{K_{\rm a}'} - K_{\rm a})}^+ - {(\underline{K_{\rm a}'}-K_{\rm a})}^+ + t,  \big.\notag \\
		&\quad \quad \big. \overline{K_{\rm a}'} - {(\underline{K_{\rm a}'}\!-\!K_{\rm a})}^+, M-\max\{\underline{K_{\rm a}'},K_{\rm a}\}\big\}, \\
		\!\!\!\!\!\!\xi(K_{\rm a},K_{\rm a}') &= 
			\min_{K\colon K \ne K_{\rm a}'} \P[m\left(\rvVec{y}_0,K_{\rm a}' \right) < m\left(\rvVec{y}_0,K\right)], \label{eq:xi}
	\end{align}
	in~\eqref{eq:xi}, $\rvVec{y}_0 \sim \Cc\Nc(\mathbf{0},(1+K_{\rm a}P')\Id_n)$. The random variable $\rv{I}_t$ in~\eqref{eq:qt} and \eqref{eq:qtt} is defined as
	\begin{equation} \label{eq:def_It}
		\rv{I}_t \defeq \!\!\min_{\Wc_{02} \subset [(K_{\rm a} - \overline{K_{\rm a}'})^+ + 1:K_{\rm a}] \atop |\Wc_{02}| = t}\!\! \imath_t(c(\Wc_{01}') + c(\Wc_{02});\rvVec{y} \cond c([K_{\rm a}] \setminus \Wc_0)),
	\end{equation} 
	where $\Wc_{01}' = [K_{\rm a} + 1: \underline{K_{\rm a}'}]$, $\Wc_0 = [(K_{\rm a} - \overline{K_{\rm a}'})^+] \cup \Wc_{02}$, 
	and 
	\begin{align}
		&\imath_t(c(\Wc_{0});\rvVec{y} \cond c(\Wc \setminus \Wc_0)) \notag \\
		&= n \ln(1+(t+(K_a\!\!-\overline{K_{\rm a}'})^+)P') + \frac{\|\rvVec{y} - c(\Wc \setminus \Wc_0)\|^2}{1+(t+(K_a\!-\!\overline{K_{\rm a}'})^+)P'} \notag \\
		&\quad - \|\rvVec{y} - c(\Wc_0) - c(\Wc \setminus \Wc_0)\|^2.  \label{eq:infor_den}
	\end{align}
\end{theorem}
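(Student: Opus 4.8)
\noindent\emph{Proof outline.} The plan is to adapt the random-coding argument of \cite[Th.~1]{PolyanskiyISIT2017massive_random_access} to the two-step decoder introduced above, the genuinely new ingredient being a careful separation of the misdetections and false alarms that are \emph{forced} by the mismatch between $\rv{K}_{\rm a}$ and the decoded-list window $[\underline{K_{\rm a}'},\overline{K_{\rm a}'}]$ from those \emph{caused} by the minimum-distance step. I would use the same i.i.d.\ $\CN[\mathbf{0},P'\Id_n]$ codebook with all-zero fallback as in \cite{PolyanskiyISIT2017massive_random_access}. Three events are dispatched up front by a union bound and contribute the common term $p_0$ to both $P_{\rm MD}$ and $P_{\rm FA}$: (i) $\rv{K}_{\rm a}\notin[K_l:K_u]$, of probability $1-\sum_{K_{\rm a}=K_l}^{K_u}P_{\rv{K}_{\rm a}}(K_{\rm a})$; (ii) a message collision among the active users, of probability $1-\E_{\rv{K}_{\rm a}}\!\left[M!/(M^{\rv{K}_{\rm a}}(M-\rv{K}_{\rm a})!)\right]$; and (iii) some active user violating the power constraint, bounded by $\E[\rv{K}_{\rm a}]\,\Gamma(n,nP/P')/\Gamma(n)$ because $\|\cv_i\|^2/P'$ is chi-square distributed with $2n$ degrees of freedom. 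Conditioning on none of these, a change of measure lets me replace the truncated codewords by genuine Gaussians at a cost already absorbed in $p_0$, so that, after marginalizing the true message set, the channel output has the distribution $\rvVec{y}_0\sim\CN[\mathbf{0},(1+K_{\rm a}P')\Id_n]$ appearing in \eqref{eq:xi}.

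Next I would bound the size-estimation error. Given the true value $K_{\rm a}$, the probability that the estimator returns a particular $K_{\rm a}'$ is at most $\P[m(\rvVec{y}_0,K_{\rm a}')\le m(\rvVec{y}_0,K)]$ for \emph{every} competing $K\neq K_{\rm a}'$, hence at most $\xi(K_{\rm a},K_{\rm a}')$. Whenever $K_{\rm a}'$ is returned, the window $[\underline{K_{\rm a}'},\overline{K_{\rm a}'}]$ already forces at least $(K_{\rm a}-\overline{K_{\rm a}'})^+$ misdetections and at least $(\underline{K_{\rm a}'}-K_{\rm a})^+$ false alarms; crudely charging every potentially erroneous message to the estimation event itself then supplies the $\xi$ term in the three-way minima in \eqref{eq:eps_MD} and \eqref{eq:eps_FA}.

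The bulk of the work is the minimum-distance step for fixed $K_{\rm a}$ and $K_{\rm a}'$. I would relabel so that the forced-out true messages are $[(K_{\rm a}-\overline{K_{\rm a}'})^+]$ and the forced-in false messages are $\Wc_{01}'=[K_{\rm a}+1:\underline{K_{\rm a}'}]$, and take as reference the list $\Wc_{\rm ref}=\big([K_{\rm a}]\setminus[(K_{\rm a}-\overline{K_{\rm a}'})^+]\big)\cup\Wc_{01}'$, i.e.\ the best in-window approximation to the true list. A decoding error that additionally drops a set $\Wc_{02}$ of $t$ true messages and adds $t'$ false ones then competes against $\Wc_{\rm ref}$ in the metric $\|c(\cdot)-\rvVec{y}\|^2$, subject to the list-size constraint $\underline{K_{\rm a}'}\le K_{\rm a}-(K_{\rm a}-\overline{K_{\rm a}'})^+-t+t'+(\underline{K_{\rm a}'}-K_{\rm a})^+\le\overline{K_{\rm a}'}$ and the codebook-size bound $M$; these, together with nonnegativity of the list size, yield the index sets $\Tc$, $\Tc_t$, $\overline{\Tc}_t$, $u_t$, the effective interference level $P_2$ in \eqref{eq:P2} (noise plus the forced components), and the competitor count $\binom{\min\{K_{\rm a},\overline{K_{\rm a}'}\}}{t}\binom{M-\max\{K_{\rm a},\underline{K_{\rm a}'}\}}{t'}$, whence $R_1$ and $R_2$. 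I would bound the union of pairwise error events in two complementary ways: (a) Chernoff bounding followed by Gallager's $\rho$-trick \cite[p.~136]{Gallager1968information}, optimizing first over the Chernoff parameter $\lambda$ (which produces $E_0$ in \eqref{eq:E0} along with $a$, $b$, $\mu$) and then over $\rho,\rho_1\in[0,1]$ (which produces the exponent $E(t,t')$ in \eqref{eq:Ett} and $p_{t,t'}=e^{-nE(t,t')}$, with $p_t$ following from a further union over $t'\in\overline{\Tc}_t$); and (b) rewriting each pairwise event through the information density $\imath_t$ of \eqref{eq:infor_den} and \eqref{eq:def_It} and invoking \cite[Cor.~17.1]{Polyanskiy2019lecture}, which gives $q_{t,t'}$ and $q_t$. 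The per-configuration estimate is the minimum of these quantities with $\xi(K_{\rm a},K_{\rm a}')$.

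Finally, to assemble $\epsilon_{\rm MD}$ and $\epsilon_{\rm FA}$ I would attach to a configuration with $t$ optional and $(K_{\rm a}-\overline{K_{\rm a}'})^+$ forced misdetections out of $K_{\rm a}$ true messages the per-user misdetection weight $\frac{t+(K_{\rm a}-\overline{K_{\rm a}'})^+}{K_{\rm a}}$, and the false-alarm weight $\frac{t'+(\underline{K_{\rm a}'}-K_{\rm a})^+}{|\widehat{\Wc}|}$ with $|\widehat{\Wc}|=K_{\rm a}-t-(K_{\rm a}-\overline{K_{\rm a}'})^++t'+(\underline{K_{\rm a}'}-K_{\rm a})^+$; then sum the weighted per-configuration estimates over $t\in\Tc$ (and over $t'\in\overline{\Tc}_t$ for misdetection, $t'\in\Tc_t$ for false alarm) and over $K_{\rm a}'\in[K_l:K_u]$, average over $\rv{K}_{\rm a}\sim P_{\rv{K}_{\rm a}}$, and add $p_0$, to recover \eqref{eq:eps_MD} and \eqref{eq:eps_FA}. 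I expect the main obstacle to be precisely this combinatorial bookkeeping: pinning down the reference list against which pairwise errors are measured so that the union over all admissible list sizes is neither over- nor under-counted, and extracting from the list-size and codebook constraints the exact ranges $\Tc,\Tc_t,\overline{\Tc}_t,u_t$ together with the interference parameter $P_2$. The exponent optimization yielding $E(t,t')$ and $E_0$, and the information-density manipulations yielding $q_{t,t'}$ and $q_t$, should be routine adaptations of the corresponding steps in \cite{PolyanskiyISIT2017massive_random_access}.
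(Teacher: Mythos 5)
Your proposal follows essentially the same route as the paper's proof: the same change of measure absorbing the three atypical events into $p_0$, the same split of errors into the \emph{initial} \glspl{MD}/\glspl{FA} forced by the window $[\underline{K_{\rm a}'}:\overline{K_{\rm a}'}]$ versus the \emph{additional} ones from the minimum-distance step (with the same reference list and hence the same pairwise event), the same $\xi(K_{\rm a},K_{\rm a}')$ bound on the estimator, and the same pair of bounds (Chernoff plus Gallager's $\rho$-trick for $p_{t,t'}$, information density for $q_{t,t'}$) combined by a minimum and weighted per user. The only step you omit is the final argument that a \emph{single} code meets both \eqref{eq:eps_MD} and \eqref{eq:eps_FA} simultaneously, which the paper handles by invoking the common randomness $\rv{U}$ of Definition~\ref{def:code} and reducing to time-sharing over at most three deterministic codes as in \cite{Polyanskiy2011feedback}; since the Gaussian ensemble itself is an admissible randomized code, this is a minor omission rather than a gap.
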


Some remarks are in order.
\begin{enumerate}[leftmargin=*,label={\roman*)}]
	\item The parameters $K_{l}$ and $K_{u}$ can be taken to be the essential infimum and the essential supremum of $\rv{K}_{\rm a}$, respectively. In numerical evaluation, it is often convenient to set $K_{l}$ to be the largest value and $K_{u}$ the smallest value for which $\sum_{K_{\rm a} = K_{l}}^{K_{u}}P_{\rv{K}_{\rm a}}(K_{\rm a})$ exceeds a predetermined threshold.  
	
	\item The term $1 - \E_{\rv{K}_{\rm a}}\left[\frac{M!}{M^{\rv{K}_{\rm a}}(M-\rv{K}_{\rm a})!} \right]$ in $p_0$ can be upper-bounded by $\E_{\rv{K}_{\rm a}}\big[\binom{\rv{K}_{\rm a}}{2}/M\big]$ as in \cite{PolyanskiyISIT2017massive_random_access}. 
	
	\item The term $R_1$ in~\eqref{eq:R1} can be upper-bounded by $ \frac{1}{n} \ln (M - \max\{K_{\rm a},\underline{K_{\rm a}'}\}) - \frac{1}{nt'} \ln t'!$, which allows for a stable computation when $M - \max\{K_{\rm a},\underline{K_{\rm a}'}\}$ is large.
	
	\item The optimal $\lambda$ in~\eqref{eq:E0} is given by the largest real root of the cubic function $c_1x^3 + c_2x^2 + c_3x + c_4$ with
	\begin{align}
		c_1 &=  -\rho \rho_1(\rho\rho_1 + 1)t'P'P_2P_3^2,\\
		c_2 &= \rho\rho_1 t'P'P_3^2 - \rho\rho_1(3-\rho_1)t'P'P_2P_3 \notag \\
		&\quad -\rho\rho_1(\rho_1+1)P_2P_3^2,\\
		c_3 &= (2\rho-1)\rho_1 t'P'P_3 + \rho_1 P_3^2 - 2\rho\rho_1 P_2P_3, \\
		c_4 &= (\rho-1)\rho_1 t'P' + \rho_1P_3, 
	\end{align}
	where $P_2$ is given by~\eqref{eq:P2} and $P_3 \defeq (t' + \rho t)P'$.
	
	\item \revise{If the number of active users is fixed to $K_{\rm a}$, by letting $K_{\rm a}' = K_{\rm a}$ with probability $1$ and \revisee{by} setting the decoding radius $r$ to $0$, 
		one obtains from Theorem~\ref{thm:RCU_unknownKa} a trivial generalization of \cite[Th.~1]{PolyanskiyISIT2017massive_random_access} to the complex case.}
\end{enumerate}

\begin{proof}[Proof of Theorem~\ref{thm:RCU_unknownKa}]
 	We next present a sketch of the proof. The full proof can be found in Appendix~\ref{app:proof}. 
	
	Denote by $\Wc_0$ the set of misdetected messages, i.e., $\Wc_0 \defeq \Wc \setminus \widehat{\Wc}$, and by $\Wc_0'$ the set of falsely alarmed messages, i.e., $\Wc_0' \defeq \widehat{\Wc} \setminus \Wc$. The \gls{MD} and \gls{FA} probabilities, given in~\eqref{eq:eps_MD} and \eqref{eq:eps_FA}, respectively, can be expressed as 
	$P_{\rm MD} = \E[\ind{|\Wc| \ne 0} \cdot \md]$ and $P_{\rm FA} = \E[\ind{|\widehat{\Wc}| \ne 0} \cdot \fa]$.
	At a cost of adding a constant bounded by $p_0$ given in~\eqref{eq:p0}, we first replace the measure over which the expectation is taken by the one under which: i) there are at least $K_{l}$ and at most $K_{u}$ active users; ii) $\widetilde{\rv{w}}_1,\dots,\widetilde{\rv{w}}_{\rv{K}_{\rm a}}$ are sampled uniformly without replacement from $[M]$; iii) $\rvVec{x}_i = \cv_{\rv{w}_i} \forall i,$ instead of $\rvVec{x}_i = \cv_{\rv{w}_i} \ind{\|\cv_{\rv{w}_i}\|^2 \le nP}$.
	
	Let $K_{\rm a} \to K_{\rm a}'$ denote the event that the estimation step outputs $K_{\rm a}'$ while $K_{\rm a}$ users are active.
	Given $K_{\rm a} \to K_{\rm a}'$, note that if $\overline{K_{\rm a}'} < K_{\rm a}$, the decoder commits at least $K_{\rm a} - \overline{K_{\rm a}'}$ \glspl{MD}; if $\underline{K_{\rm a}'} > K_{\rm a}$, the decoder commits at least $\underline{K_{\rm a}'} - K_{\rm a}$ \glspl{FA}. We let $\Wc_0 = \Wc_{01} \cup \Wc_{02}$ where $\Wc_{01}$ denotes the list of $(\rv{K}_{\rm a} - \overline{K_{\rm a}'})^+$ \textit{initial} \glspl{MD} due to insufficient decoded list size, and $\Wc_{02}$ the \textit{additional} \glspl{MD} occurring during decoding. Similarly, let $\Wc_0' = \Wc_{01}' \cup \Wc_{02}'$ where $\Wc_{01}'$ denotes the list of $(\underline{K_{\rm a}'}-\rv{K}_{\rm a})^+$ \textit{initial} \glspl{FA} due to excessive decoded list size, and $\Wc_{02}'$ the \textit{additional} \glspl{FA}. Fig.~\ref{fig:venn} depicts the
	relation between these sets.  
	\begin{figure}
		\centering
		\begin{tikzpicture}[thick,scale=0.95, every node/.style={scale=0.95}]
			\def\radius{2cm}
			\def\radiusB{0.9*\radius}
			\def\mycolorbox#1{\textcolor{#1}{\rule{2ex}{2ex}}}
			\colorlet{colori}{gray!80}
			\colorlet{colorii}{gray!20}
			
			\coordinate (ceni) at (0,0);
			\coordinate[xshift=1.05*\radius] (cenii);
			
			\coordinate (edge1a) at (-\radiusB,0.1cm);
			\coordinate(edge1b) at (\radiusB,-.2cm);
			
			\coordinate (edge2a) at (\radius-\radiusB-.1cm,.1cm);
			\coordinate (edge2b) at (\radius+\radiusB+.3cm,-.2cm);
			
			\draw[fill=colori,fill opacity=0.5] (ceni) circle (\radiusB);
			\draw[fill=colorii,fill opacity=0.5] (cenii) circle (\radius);
			\draw (ceni) circle (\radiusB);
			
			\draw (edge1a) to (edge2a);
			\draw (edge1b) to (edge2b);
			
			\draw[-latex] (-\radius,-0.6*\radius) node[below,xshift=-.4cm,text width=2cm,align=center] {\small Transmitted messages $\Wc$} -- (-0.77*\radius,-0.5*\radius);
			\draw[-latex] (2.13*\radius,-0.6*\radius) node[below,xshift=.3cm,text width=2cm,align=center] {\small Decoded messages $\widehat{\Wc}$} -- (1.93*\radius,-0.5*\radius);
			
			\node[yshift=1.1*\radius,xshift=-1.25cm,text width=4cm,align=center] {\small \glspl{MD}: \\ $\Wc_0 = \Wc_{01} \cup \Wc_{02} = \Wc \setminus \widehat{\Wc}$};
			
			\node[yshift=1.2*\radius,xshift=1.25cm,text width=4cm,align=center] at (cenii) {\small \glspl{FA}: \\ $\Wc_0' = \Wc'_{01} \cup \Wc'_{02} = \widehat{\Wc} \setminus \Wc$};
			
			\node[xshift=.93cm,text width=1.2cm,align=center] at (ceni) {\small correctly decoded messages $\Wc \cap \widehat{\Wc}$};
			\node[yshift=.8\radius,xshift=-.62cm,text width=2.1cm,align=center] at (ceni) {\small $~~~~(\rv{K}_{\rm a}-\overline{K_{\rm a}'})^+$ initial \glspl{MD} $~~~\Wc_{01}~~~$};
			\node[yshift=-.8\radius,xshift=-.7\radius,text width=1.5cm,align=center] at (ceni) {\small additional \glspl{MD} $~~~\Wc_{02}~~~$};
			\node[yshift=.7\radius,xshift=.6\radius,text width=1.5cm,align=center] at (cenii) {\small $(\underline{K_{\rm a}'}-\rv{K}_{\rm a})^+$ initial \glspl{FA} $~~~\Wc'_{01}~~~$};
			\node[yshift=-1cm,xshift=.7\radius,text width=1.5cm,align=center] at (cenii) {\small additional \glspl{FA} $~~~\Wc'_{02}~~~$};
		\end{tikzpicture}
		\caption{A diagram depicting the relation between the defined sets of messages.}
		\label{fig:venn}
	\end{figure}

Using the above definitions, the set of transmitted messages is $\Wc = \Wc_{01} \cup \Wc_{02} \cup (\Wc \setminus \Wc_0)$, and the received signal is $\rvVec{y} = c(\Wc_{01}) + c(\Wc_{02}) + c(\Wc \setminus \Wc_0) + \rvVec{z}$. Since the messages in $\Wc_{01}$ are always misdetected and the messages in $\Wc_{01}'$ are always falsely alarmed, the best approximation of $\Wc$ that the decoder can produce is $\Wc_{02} \cup (\Wc \setminus \Wc_0) \cup \Wc_{01}'$. However, under the considered error event $\Wc \to \widehat{\Wc}$, the actual decoded list is $\Wc'_{02} \cup (\Wc \setminus \Wc_0) \cup \Wc_{01}'$. Therefore, $\Wc \to \widehat{\Wc}$ implies the event $F(\Wc_{01},\Wc_{02},\Wc_{01}',\Wc_{02}') \defeq \big\{\|c(\Wc_{01}) + c(\Wc_{02})- c(\Wc_{01}') - c(\Wc_{02}') + \rvVec{z}\|^2 < \|c(\Wc_{01}) - c(\Wc_{01}') + \rvVec{z}\|^2 \big\}$.

It follows that, after the change of measure, $P_{\rm MD}$ and $P_{\rm FA}$ can be bounded as
\begin{align}
	P_{\rm MD} &\le \sum_{K_{\rm a} =\max\{K_{l},1\}}^{K_{u}} \!\!\bigg(P_{\rv{K}_{\rm a}}(K_{\rm a}) \sum_{K_{\rm a}' = K_{l}}^{K_{u}} \sum_{t\in \Tc}\frac{t+(K_{\rm a}-\overline{K_{\rm a}'})^+}{K_{\rm a}} \notag \\
	&\qquad \cdot \P[|\Wc_{02}| = t, K_{\rm a} \to K_{\rm a}'] \bigg) + p_0,  \\
	P_{\rm FA} &\le \sum_{K_{\rm a} =K_{l}}^{K_{u}} \bigg(P_{\rv{K}_{\rm a}}(K_{\rm a}) \sum_{K_{\rm a}' = K_{l}}^{K_{u}}  \sum_{t\in \Tc} \sum_{t' \in \Tc_t} \notag \\
	&\qquad \frac{t+(\underline{K_{\rm a}'} - K_{\rm a})^+}{K_{\rm a} \!-\! t \!-\! {(K_{\rm a} \!-\! \overline{K_{\rm a}'})}^+ \!\!+\! t' \!+\! {(\underline{K_{\rm a}'}\!-\!K_{\rm a})}^+\!} \notag \\
	&\qquad \cdot \P[|\Wc_{02}| = t, |\Wc_{02}'| = t', K_{\rm a} \to K_{\rm a}'] \bigg) \!+\! p_0, \label{eq:tmp853}
\end{align} 
where $\Tc$ and $\Tc_t$ are given by~\eqref{eq:T} and \eqref{eq:Tt}, respectively. The constraint $t \in \Tc$ holds because the number of \glspl{MD}, given by $t+{(K_{\rm a}-\overline{K_{\rm a}'})}^+$, is upper-bounded by the total number of transmitted messages $K_{\rm a}$, and by $M-\underline{K_{\rm a}'}$ (since at least $\underline{K_{\rm a}'}$ messages are returned). The constraint $t' \in \Tc_t$ holds because: i) the decoded list size, given by $K_{\rm a} - t - {(K_{\rm a} - \overline{K_{\rm a}'})}^+ + t' + {(\underline{K_{\rm a}'}-K_{\rm a})}^+$, must be in $[\underline{K_{\rm a}'} : \overline{K_{\rm a}'}]$ and must be positive since the event $|\widehat{\Wc}| = 0$ results in no \gls{FA} by definition; ii) the number of \glspl{FA}, given by $t'+ (\underline{K_{\rm a}'}-K_{\rm a})^+$, is upper-bounded by the number of messages that are not transmitted $M-K_{\rm a}$, and by the maximal number of decoded messages $\overline{K_{\rm a}'}$.

\revisee{Let $A(K_{\rm a},K_{\rm a}') \defeq
	\{\rvVec{y} \colon K_{\rm a}' = \arg\min_{K \in [K_l : K_u]} m(\rvVec{y},K) \}$.}
Since the event $K_{\rm a} \to K_{\rm a}'$ implies $|\widehat{\Wc}| \in [\underline{K_{\rm a}'}:\overline{K_{\rm a}'}]$ and $A(K_{\rm a},K_{\rm a}')$, we have that 
\begin{align*}
	&\P[|\Wc_{02}| = t, K_{\rm a} \to K_{\rm a}'] \notag \\
	&\le \P[{|\Wc_{02}| = t, |\widehat{\Wc}|\in [\underline{K_{\rm a}'}:\overline{K_{\rm a}'}],A(K_{\rm a},K_{\rm a}')}] \\
	&\le \min\big\{\P\big({|\Wc_{02}| \!=\! t, |\widehat{\Wc}|\in [\underline{K_{\rm a}'}:\overline{K_{\rm a}'}]}\big),  \P[A(K_{\rm a},K_{\rm a}')] \big\}.
\end{align*}
Similarly, $\P[{|\Wc_{02}| = t,|\Wc_{02}'| = t', K_{\rm a} \to K_{\rm a}'}] \le \min\Big\{\P\Big[|\Wc_{02}| \!=\! t, |\Wc_{02}'| \!=\! t', |\widehat{\Wc}|\!\in\! [\underline{K_{\rm a}'}:\overline{K_{\rm a}'}]\Big]$, $\P[A(K_{\rm a},K_{\rm a}')] \Big\}.$

Under the new measure, $\rvVec{y} \sim \Cc\Nc(\mathbf{0},(1+K_{\rm a} P')\Id_n)$. Thus, we can show that $\P[A(K_{\rm a},K_{\rm a}')]$ is upper-bounded by $\xi(K_{\rm a},K_{\rm a}')$ given by~\eqref{eq:xi}. To establish \eqref{eq:eps_MD} and \eqref{eq:eps_FA}, we proceed as in \cite{PolyanskiyISIT2017massive_random_access}, write the events $\{|\Wc_{02}| \!=\! t, |\widehat{\Wc}| \!\in\! [\underline{K_{\rm a}'}:\overline{K_{\rm a}'}]\}$ and $\{|\Wc_{02}| \!=\! t,|\Wc_{02}'| \!=\! t', |\widehat{\Wc}|\!\in\! [\underline{K_{\rm a}'}:\overline{K_{\rm a}'}]\}$ as the union of 
	$F(\Wc_{01},\Wc_{02},\Wc_{01}',\Wc_{02}')$ events, and bound their probabilities by $\min\{p_t, q_t\}$ and $\min\{p_{t,t'}, q_{t,t'}\}$, respectively. 	
	
	\revisee{Finally, to guarantee that {\em both} \eqref{eq:eps_MD} and \eqref{eq:eps_FA} are satisfied, we allow for randomized coding strategies, by introducing the variable $\rv{U}$, which acts as common randomness. Proceeding as in \cite[Th.~19]{Polyanskiy2011feedback}, one can show that
		it is sufficient to perform randomization across (at most) three deterministic codes, i.e., $|\Uc|\le 3$.}
\end{proof}

	In the following proposition, we derive $\xi(K_{\rm a},K_{\rm a}')$ for two different estimators of $\rv{K}_{\rm a}$.
	\begin{proposition} \label{prop:xi}
		For the \gls{ML} estimation of $\rv{K}_{\rm a}$, i.e., $m(\yv,K) \!=\! -\ln p_{\rvVec{y} | \rv{K}_{\rm a}}(\yv | K)$, $\xi(K_{\rm a}, K_{\rm a}')$ is given by
		\begin{align} \label{eq:xi_ML}
			\xi(K_{\rm a},K_{\rm a}') &\defeq \min_{K:\; K \ne K_{\rm a}'} \!\Big(\ind{K \!<\! K_{\rm a}'}\frac{\Gamma(n,\zeta(K,K_{\rm a},K_{\rm a}')}{\Gamma(n)} \notag \\ 
			&	\qquad ~~ + \ind{K \!>\! K_{\rm a}'}\frac{\gamma(n,\zeta(K,K_{\rm a},K_{\rm a}'))}{\Gamma(n)}\Big),
		\end{align}
		with 
		\begin{align}
			\zeta(K,K_{\rm a},K_{\rm a}') &\defeq n \ln\left(\frac{1+KP'}{1+K_{\rm a}'P'}\right)(1+K_{\rm a}P')^{-1} \notag \\
			&\quad \cdot \left(\frac{1}{1+K_{\rm a}'P'}-\frac{1}{1+KP'}\right)^{-1}. \label{eq:zeta_ML}
		\end{align}
	
		For an energy-based estimation of $\rv{K}_{\rm a}$ with $m(\yv,K) = |\|\yv\|^2 - n(1 + KP')|$, $\xi(K_{\rm a},K_{\rm a}')$ is given by~\eqref{eq:xi_ML} with \begin{align}
			\zeta(K,K_{\rm a},K_{\rm a}') \defeq \frac{n}{1+K_{\rm a}P'}\left(1+\frac{K + K_{\rm a}'}{2}P'\right). \label{eq:zeta_energy}
		\end{align}
		
	\end{proposition}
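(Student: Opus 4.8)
The plan is to compute, for each of the two metrics and each $K \ne K_{\rm a}'$, the probability $\P[m(\rvVec{y}_0,K_{\rm a}') < m(\rvVec{y}_0,K)]$ with $\rvVec{y}_0 \sim \Cc\Nc(\mathbf{0},(1+K_{\rm a}P')\Id_n)$, and then take the minimum over $K$, as dictated by \eqref{eq:xi}. The observation that makes both cases tractable is that, for either choice of $m$, the comparison event is equivalent to a \emph{one-sided} threshold test on the scalar statistic $\rv{s} \defeq \|\rvVec{y}_0\|^2/(1+K_{\rm a}P')$, where the direction of the inequality is governed by the sign of $K-K_{\rm a}'$.

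For the \gls{ML} metric, I would first write the decoding likelihood explicitly: under the hypothesis of $K$ active users, $\rvVec{y}$ has the law of a $\Cc\Nc(\mathbf{0},(1+KP')\Id_n)$ vector, so $m(\rvVec{y}_0,K) = n\ln(\pi(1+KP')) + \|\rvVec{y}_0\|^2/(1+KP')$. Subtracting the two metrics, the constants $n\ln\pi$ cancel and $m(\rvVec{y}_0,K_{\rm a}') < m(\rvVec{y}_0,K)$ becomes $\|\rvVec{y}_0\|^2\bigl(\tfrac{1}{1+K_{\rm a}'P'} - \tfrac{1}{1+KP'}\bigr) < n\ln\tfrac{1+KP'}{1+K_{\rm a}'P'}$. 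Dividing by the (nonzero) coefficient of $\|\rvVec{y}_0\|^2$ and by $1+K_{\rm a}P'$ yields $\rv{s} < \zeta(K,K_{\rm a},K_{\rm a}')$ when $K > K_{\rm a}'$ (positive coefficient) and $\rv{s} > \zeta(K,K_{\rm a},K_{\rm a}')$ when $K < K_{\rm a}'$ (negative coefficient, so the inequality flips), with $\zeta$ as in \eqref{eq:zeta_ML}; in the second regime both the numerator $\ln\tfrac{1+KP'}{1+K_{\rm a}'P'}$ and the denominator are negative, so $\zeta>0$, as required for the probabilities below to make sense.

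For the energy metric, the event $|\|\rvVec{y}_0\|^2 - n(1+K_{\rm a}'P')| < |\|\rvVec{y}_0\|^2 - n(1+KP')|$ states exactly that $\|\rvVec{y}_0\|^2$ lies on the $K_{\rm a}'$-side of the midpoint $n\bigl(1+\tfrac{K+K_{\rm a}'}{2}P'\bigr)$ of the two targets. Hence it is $\|\rvVec{y}_0\|^2 < n\bigl(1+\tfrac{K+K_{\rm a}'}{2}P'\bigr)$ when $K > K_{\rm a}'$ and the reverse when $K < K_{\rm a}'$; dividing by $1+K_{\rm a}P'$ gives $\rv{s} < \zeta$ resp.\ $\rv{s} > \zeta$ with $\zeta$ as in \eqref{eq:zeta_energy}. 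Thus both metrics land on the same normalized form, differing only in the value of $\zeta$.

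It then remains to identify the law of $\rv{s}$. Since the $n$ complex entries of $\rvVec{y}_0$ are i.i.d.\ $\Cc\Nc(0,1+K_{\rm a}P')$, each $|\rvVec{y}_0(j)|^2/(1+K_{\rm a}P')$ is a unit-mean exponential, i.e.\ $\Gamma(1,1)$, random variable, so $\rv{s} \sim \Gamma(n,1)$; therefore $\P[\rv{s} < \zeta] = \gamma(n,\zeta)/\Gamma(n)$ and $\P[\rv{s} > \zeta] = \Gamma(n,\zeta)/\Gamma(n)$. Combining this with the case split above reproduces the two indicator terms in \eqref{eq:xi_ML}, and taking the minimum over $K \ne K_{\rm a}'$ finishes the argument. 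The only step that requires care is the bookkeeping of the inequality direction when passing from the metric comparison to the threshold on $\rv{s}$ (and the accompanying check that $\zeta>0$ in each regime); the rest is a direct computation with the Gaussian density and the definition of the incomplete Gamma functions.
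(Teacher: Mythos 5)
Your proposal is correct and follows essentially the same route as the paper's proof: write out the two metrics, reduce the comparison event to a one-sided threshold on $\|\rvVec{y}_0\|^2$ (with the inequality direction set by the sign of $K-K_{\rm a}'$), and evaluate the resulting tail probability via the Gamma$(n,1+K_{\rm a}P')$ law of $\|\rvVec{y}_0\|^2$. Your explicit bookkeeping of the sign of the coefficient and the positivity of $\zeta$ is slightly more careful than the paper's write-up, but it is the same argument.
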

	\begin{proof}
		See Appendix~\ref{proof:xi}.
	\end{proof}

\revisee{The decoding radius $r$ can be optimized according to the target \gls{MD} and \gls{FA} probabilities. A large decoding radius reduces the initial \glspl{MD} and \glspl{FA} at the cost of overfitting, especially at low SNR values. Specifically, when the noise dominates, increasing $r$ increases the chance that the decoder~\eqref{eq:decoder_Ka'} returns a list whose sum is closer in Euclidean distance to the noise than to the sum of the transmitted codewords.}
	

Our random-coding bound can also be applied to \revisee{SA-MPR} to investigate the resulting \gls{MD}--\gls{FA} trade-off. Consider an \revisee{SA-MPR} scheme where a length-$n$ frame is divided into $L$ slots and each user chooses randomly a slot to transmit. For $K_{\rm a}$ active users, the number of users transmitting in a slot follows a Binomial distribution with parameter $(K_{\rm a},1/L)$. 
The \gls{PMF} of the number of active users per slot, denoted by $\rv{K}_{\rm {SA}}$, is given by
	$P_{\rv{K}_{\rm {SA}}}(K_{\rm SA}) = \sum_{K_{\rm a}} P_{\rv{K}_{\rm a}}(K_{\rm a}) \binom{K_{\rm a}}{K_{\rm SA}} L^{-K_{\rm SA}}\left(1-\frac{1}{L}\right)^{K_{\rm a} - K_{\rm SA}}$.
Existing analyses of slotted ALOHA usually assume that the decoder can detect perfectly if no signal, one signal, or more than one signal have been transmitted in a slot. Furthermore, it is usually assumed that a collision-free slot leads to successful message decoding. However, the more slots, the shorter the slot length over which a user transmits its signal. 
To account for both detection and decoding errors, in Corollary~\ref{coro:slotted_ALOHA} below, we apply our decoder in a slot-by-slot manner, and obtain a random-coding bound similar to Theorem~\ref{thm:RCU_unknownKa}. 

\begin{corollary} \label{coro:slotted_ALOHA}
	For the Gaussian \gls{MAC} with the number of active users following $P_{\rv{K}_{\rm a}}$ and frame length $n$, an SA-MPR scheme with $L$ slots can achieve the \gls{MD} and \gls{FA} probabilities given in~\eqref{eq:eps_MD} and \eqref{eq:eps_FA}, respectively, with codebook size $M$, codeword length $n/L$, power constraint $PL$, and per-slot number of active users following $P_{\rv{K}_{\rm {SA}}}$.
\end{corollary}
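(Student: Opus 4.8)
The plan is to reduce Corollary~\ref{coro:slotted_ALOHA} to Theorem~\ref{thm:RCU_unknownKa} applied independently to each of the $L$ slots. First I would set up the per-slot channel model: since the length-$n$ frame is partitioned into $L$ disjoint sub-blocks of length $n/L$ and each active user transmits its whole codeword inside a single, uniformly chosen slot while staying silent elsewhere, the per-user energy budget $\|\rvVec{x}_i\|^2 \le nP$ translates into a per-symbol power constraint $PL$ on the $n/L$ symbols used within a slot. Conditioned on the slot assignment, the $L$ slots are non-interfering and the channel is memoryless, so restricted to any single slot we obtain an instance of the random-access channel of Section~\ref{sec:channel} with block length $n/L$, power constraint $PL$, and an active-user count whose distribution is the binomial mixture $P_{\rv{K}_{\rm SA}}$ displayed just before the corollary (the number of users landing in a given slot is $\mathrm{Binomial}(\rv{K}_{\rm a},1/L)$, marginalized over $\rv{K}_{\rm a}\sim P_{\rv{K}_{\rm a}}$).

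Next I would run the decoder of Theorem~\ref{thm:RCU_unknownKa} on each slot separately---estimating the per-slot load, then solving the constrained minimum-distance problem~\eqref{eq:decoder_Ka'} within the resulting interval (with $K_l$, $K_u$, $r$ now chosen for the per-slot load)---and take the overall output list to be the union (with repeated messages merged) of the per-slot lists. Because every active user occupies exactly one slot, the fate of a transmitted message is decided entirely in its own slot: it is misdetected only if it fails to be decoded in that slot (being additionally decoded in some other slot can only help, and is discarded when forming an upper bound), and a decoded message is a false alarm precisely when it is a false alarm in the slot that produced it. This lets me identify the frame-level \gls{MD} and \gls{FA} events with slot-level ones and invoke Theorem~\ref{thm:RCU_unknownKa}---with the substitutions $n\mapsto n/L$, $P\mapsto PL$, $P_{\rv{K}_{\rm a}}\mapsto P_{\rv{K}_{\rm SA}}$---to bound the per-slot probabilities by the expressions~\eqref{eq:eps_MD}--\eqref{eq:eps_FA} evaluated with the slot parameters.

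The delicate step---and the one I expect to be the main obstacle---is reconciling the per-user averaging in Definition~\ref{def:code} across the two scales: the frame-level probabilities normalize by $|\widetilde{\Wc}|$ and $|\widehat{\Wc}|$ over the whole frame, whereas the per-slot analysis normalizes by the per-slot transmitted/decoded list sizes, and the occupancy of the slot seen by a uniformly random active user is size-biased relative to that of a uniformly random slot. Rather than applying Theorem~\ref{thm:RCU_unknownKa} as a pure black box, I would therefore carry out the change-of-measure and union-bound steps of its proof directly in the slotted model: conditioning on the slot assignment, the expectation defining $P_{\rm MD}$ (resp.\ $P_{\rm FA}$) splits as a sum over slots of the corresponding slot-restricted quantities, symmetry over slots collapses that sum, and the binomial averaging over the per-slot loads reproduces exactly the $P_{\rv{K}_{\rm SA}}$-weighted expressions in~\eqref{eq:eps_MD}--\eqref{eq:eps_FA}, including the power-violation and message-collision correction terms collected in $p_0$ (now with $\E[\rv{K}_{\rm SA}]$ and the slot block length $n/L$). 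Checking that this bookkeeping yields precisely~\eqref{eq:eps_MD}--\eqref{eq:eps_FA}, with no spurious $L$-fold over- or under-counting, is the crux; the remainder is a transcription of the proof of Theorem~\ref{thm:RCU_unknownKa} with the parameter relabeling above.
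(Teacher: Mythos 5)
Your reduction is exactly the one the paper intends: the corollary is stated without a separate proof precisely because, once the frame is split into $L$ non-interfering sub-blocks of length $n/L$ with per-symbol power $PL$ and per-slot load distributed as $P_{\rv{K}_{\rm {SA}}}$, each slot is literally an instance of the random-access channel to which Theorem~\ref{thm:RCU_unknownKa} applies, and the decoder is run slot by slot. Your identification of the frame-level MD/FA events with slot-level ones (merging the per-slot lists can only remove misdetections and false alarms, never create them) is also the right observation for the upper-bound direction.

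The one substantive issue is the step you yourself flag as the crux, which you assert rather than establish. The claim that ``symmetry over slots collapses the sum and the binomial averaging reproduces the $P_{\rv{K}_{\rm {SA}}}$-weighted expressions'' is not automatic for the normalization in Definition~\ref{def:code}: the frame-level quantity is an expectation of a ratio of sums over slots, $\E\bigl[\sum_l |\Wc_0^{(l)}| / \sum_l |\Wc^{(l)}|\bigr]$, while applying \eqref{eq:eps_MD} with the slot parameters bounds the expectation of the per-slot ratio under a uniformly chosen slot; and the load of the slot containing a uniformly chosen \emph{user} is the size-biased version of $P_{\rv{K}_{\rm {SA}}}$ (namely $1+\mathrm{Binomial}(K_{\rm a}-1,1/L)$), not $P_{\rv{K}_{\rm {SA}}}$ itself. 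These three quantities coincide only under additional assumptions (e.g., if the expected per-slot MD fraction were constant in the load), so ``no spurious over- or under-counting'' does not follow from symmetry alone. The clean way to discharge this is to read the corollary as the paper does: as a statement that the per-slot system is an $(M,n/L,\epsilon_{\rm MD},\epsilon_{\rm FA})$ code for the per-slot channel with load $P_{\rv{K}_{\rm {SA}}}$, for which the reduction is immediate and nothing needs to be re-derived. If you instead insist on a frame-level per-user guarantee, you must either carry the size-biased load distribution through \eqref{eq:eps_MD}--\eqref{eq:eps_FA} or bound the discrepancy between the expectation of the ratio and the ratio of expectations; your proposal currently does neither.
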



\section{Numerical Evaluation} \label{sec:numerical}
In this section, we numerically evaluate the proposed random-coding bound and compare it with \revisee{the} random-access \revisee{bound/}schemes \revisee{in\cite{PolyanskiyISIT2017massive_random_access,Fengler2019sparcs,Amalladinne2020unsourced}}. We assume that $\rv{K}_{\rm a}$ follows a Poisson distribution. 
We consider $k \defeq \log_2 M = 128$ bits and $n = 19200$ complex channel uses (i.e., $38400$ real degrees of freedom). The power is given in terms of average bit energy $\EbNo \defeq nP/k$.

\begin{figure}[t!]
	\centering
\definecolor{magenta}{rgb}{1.00000,0.00000,1.00000}%
\begin{tikzpicture}
	\tikzstyle{every node}=[font=\scriptsize]
	\begin{axis}[%
		width=3in,
		height=2in,
		at={(0.759in,0.481in)},
		scale only axis,
		xmin=0,
		xmax=300,
		xlabel style={font=\color{black},yshift=1ex},
		xlabel={\footnotesize Average number of active users $\E[\rv{K}_{\rm a}]$},
		ymin=-2,
		ymax=12,
		ytick={-2, 0, 2, 4, 6, 8, 10, 12},
		ylabel style={font=\color{black}, yshift=-3.2ex},
		ylabel={\footnotesize Required $\EbNo$ (dB)},
		axis background/.style={fill=white},
		xmajorgrids,
		ymajorgrids,
		legend style={at={(0.05,-0.18)}, anchor=north west, row sep=-2.5pt, legend cell align=left, align=left, draw=white!15!black}
		]
		\addplot [color=green,line width=1pt]
		table[row sep=crcr]{%
			17 .7030 \\
			20 0.6092 \\
			25 0.6042 \\
			50 0.6120 \\ 
			75 0.6370 \\ 
			100 0.6701 \\ 
			125 0.7329 \\
			150 0.9365 \\
			175 1.2155 \\ 
			200 1.5259 \\ 
			225 1.8565 \\ 
			250 2.2029 \\ 
			275 2.5685 \\
			300 2.9512 \\
		};
		
		\addplot [color=blue, dashed,line width=1pt]
		table[row sep=crcr]{%
			2	-0.2898 \\
			10	-0.0988 \\
			25	0.0146 \\ 
			50	0.1275 \\ 
			75	0.1615 \\ 
			100	0.1974 \\ 
			125	0.3502 \\ 
			150	0.5912 \\ 
			175	0.8533 \\ 
			200	1.1264 \\ 
			225	1.4078 \\ 
			250	1.6960 \\ 
			275	1.9234 \\ 
			300	2.2876 \\
		};
		
		\addplot [color=magenta, line width=.8pt, mark=*, mark size=1.5pt, mark options={solid,color = magenta,fill=white}]
		table[row sep=crcr]{%
			5 -0.5104 \\ 
			10 -0.3337\\ 
			25 0.5329\\ 
			50 1.5650\\ 
			75 2.8320\\ 
			100 3.7426\\ 
			125 4.7103\\ 
			150 5.8035\\ 
			175 6.8849\\ 
			200 8.0970\\ 
			225 8.9092\\ 
			250 10.1469\\ 
			275 11.3445\\ 
			300 13.0720\\ 
		};
		
		\addplot [color=magenta, line width=.8pt, mark size=1.5pt, mark=square*, mark options={solid,color = magenta,fill=white},dashed]
		table[row sep=crcr]{%
			10 -.4161\\ 
			25 .47 \\
			50 1.2404 \\
			75 2.3500 \\
			100 3.3268 \\
			125 4.4576 \\
			150 5.4664 \\
			175 6.5878 \\
			200 7.7616 \\
			225 8.4016 \\
			250 9.5047 \\
			275 10.8719 \\
			300 12.1196 \\
		};
		
		\addplot [color=magenta, line width=.8pt, mark size=1.5pt, mark=*, mark options={solid, magenta}]
		table[row sep=crcr]{%
			5 -0.6204 \\ 
			10 -0.5139 \\ 
			25 0.3957 \\ 
			50 1.2324 \\ 
			75 2.1412 \\ 
			100 3.2419\\ 
			125 4.2234 \\ 
			150 5.2535 \\ 
			175 6.2287\\ 
			200 7.2511\\ 
			225 8.1388\\ 
			250 9.1707\\ 
			275 10.2813\\ 
			300 11.8315 \\ 
		};
		
		\addplot [color=magenta, dashed, line width=.8pt, mark size=1.3pt, mark=square*, mark options={solid,color = magenta}]
		table[row sep=crcr]{%
			5 -.2949 \\ 
			10 -.4201 \\ 
			25 0.2775 \\ 
			50 1.0264 \\
			75 1.8417 \\
			100 2.8363 \\
			125 3.9383 \\ 
			150 4.8895 \\ 
			175 5.8534 \\ 
			200 6.7662 \\ 
			225 7.5140 \\ 
			250 8.4517 \\ 
			275 9.8268 \\
			300 10.8815 \\
		};

		\addplot [color=orange, mark=x, line width=.8pt, mark size=2pt, mark options={solid, orange}]
		table[row sep=crcr]{%
			15 1.889065225218457877e+00 \\
			20 1.829007583756270039e+00 \\
			50 2.152915984617741252e+00 \\
			75 2.35993197933708437e+00 \\
			100 2.5614794499550815 \\
			125 2.8702 \\
			150 3.2059 \\
			175 3.6287 \\
			200 4.2303 \\
			225 4.6432 \\
			250 5.1734 \\
			300 6.6150 \\
		};
	
		\addplot [color=orange, mark=+, line width=.8pt, mark size=2pt, mark options={solid, orange}]
		table[row sep=crcr]{%
			15 3.0828 \\
			25 3.0260 \\
			50 3.1830 \\
			75 3.2404 \\
			100 3.2987687520233147\\
			125 3.3769650702680227 \\
			150 3.5981 \\
			175 4.0657 \\
			200 4.5905 \\
			225 4.9590 \\
			250 5.5682 \\
			300 7.0733 \\
		};
	
		
		\draw [color=gray] (200,12) node [below,xshift=5pt,yshift=1pt,color=black] {\cite[Th.~1]{PolyanskiyISIT2017massive_random_access}} -- (194,30);
		
		\draw [color=gray] (260,12) node [below,xshift=5pt,yshift=1pt,color=black] {{Theorem~\ref{thm:RCU_unknownKa}}} -- (245,41);
		
		
		
		
		\draw [black] (275,120) ellipse [x radius=5, y radius=6];
		\draw [color=gray] (200,127) node [left,color=black,xshift=4pt,yshift=0pt,text width=1.9cm,align=center] {{SA-MPR with slot-index coding}} -- (272,124);
		
		\draw [black] (200,100.5) ellipse [x radius=5, y radius=6];
		\draw [color=gray] (100,115) node [left,color=black] {{SA-MPR}} -- (196,104);
		
		\draw [color=gray] (26,75) node [above,color=black,xshift=0pt,yshift=-2pt,text width=1.5cm,align=center] {SPARC~\cite{Fengler2019sparcs}} -- (32,51);
		
		\draw [color=gray] (120,12) node [below,color=black,xshift=5pt,yshift=1pt] {Enhanced SPARC~\cite{Amalladinne2020unsourced}} -- (112,47);
	\end{axis}
\end{tikzpicture}%
	\caption{The required $\EbNo$ to achieve $\max\{P_{\rm MD}, P_{\rm FA}\} \le 0.1$ vs. $\E[\rv{K}_{\rm a}]$ for $k=128$ bits, $n = 19200$ channel uses, and $\rv{K}_{\rm a} \sim \mathrm{Poisson}(\E[\rv{K}_{\rm a}])$. We compare our random coding bound (Theorem~\ref{thm:RCU_unknownKa}) and SA-MPR bound (Corollary~\ref{coro:slotted_ALOHA}) with the bound in~\cite[Th.~1]{PolyanskiyISIT2017massive_random_access} for $\rv{K}_{\rm a}$ known, and two practical schemes, namely, 
		the SPARC scheme~\cite{Fengler2019sparcs}, and its enhancement~\cite{Amalladinne2020unsourced}. 
		Solid lines represent schemes/bounds with $\rv{K}_{\rm a}$ unknown; dashed lines represent schemes/bounds with $\rv{K}_{\rm a}$ known.}
	\label{fig:EbN0_vs_EKa}
\end{figure}
In Fig.~\ref{fig:EbN0_vs_EKa}, we compare our random-coding bound with that of Polyanskiy~\cite{PolyanskiyISIT2017massive_random_access} in terms of the required $\EbNo$ so that neither $P_{\rm MD}$ nor $P_{\rm FA}$ exceed $0.1$. For our bound, we consider the \gls{ML} estimator of $\rv{K}_{\rm a}$ and zero decoding radius, i.e., $\overline{K_{\rm a}'} = \underline{K_{\rm a}'} = K_{\rm a}'$. Numerical evaluation suggests that this choice is optimal for these target MD and FA probabilities. \revise{We choose $K_l$ to be the largest value and $K_u$ the smallest value for which $\P[{\rv{K}_{\rm a} \notin [K_l,K_u]}] < 10^{-9}$. The $q_t$ and $q_{t,t'}$ terms are evaluated for $t = 1$ and $K_{\rm a} \le 50$ only.}  For the bound in \cite[Th.~1]{PolyanskiyISIT2017massive_random_access}, we average over the Poisson distribution of $\rv{K}_{\rm a}$. This corresponds to the scenario where $\rv{K}_{\rm a}$ is random but known. As can be seen, the extra required $\EbNo$ due to the lack of knowledge of $\rv{K}_{\rm a}$ is about $0.5$--$0.7$~dB. 

In Fig.~\ref{fig:EbN0_vs_EKa}, we also show the performance of the SA-MPR bound given in Corollary~\ref{coro:slotted_ALOHA}, where we 
optimize $L$ and the decoding radius for each $\E[\rv{K}_{\rm a}]$. We also consider the possibility to encode $\lfloor \log_2 L\rfloor$ extra bits for each user in the slot index, and assume perfect decoding of these bits. We refer to this scheme as SA-MPR with slot-index coding. We also evaluate the performance of two practical schemes, namely:
\begin{itemize}
	
	\item the SPARC scheme proposed in~\cite{Fengler2019sparcs}, which employs a concatenated coding framework with an inner approximate message passing~(AMP) decoder followed by an outer tree decoder.
	
	\item an enhancement of the SPARC scheme proposed in~\cite{Amalladinne2020unsourced}, which we refer to as enhanced SPARC. This scheme introduces belief propagation between the inner AMP decoder and the outer tree decoder in an iterative manner. 
\end{itemize}
Note that 
the SPARC and enhanced SPARC schemes were proposed for the Gaussian \gls{MAC} with \textit{known} number of active users. To adapt these schemes to the case where $\rv{K}_{\rm a}$ is unknown, we employ an energy-based estimation of $\rv{K}_{\rm a}$, then treat this estimate as the true $\rv{K}_{\rm a}$ in the decoding process. From Fig.~\ref{fig:EbN0_vs_EKa}, we see that SA-MPR, even with slot-index coding, becomes power inefficient as $\E[\rv{K}_{\rm a}]$ increases. 
The enhanced SPARC scheme achieves the closest performance to our bound for $\E[\rv{K}_{\rm a}] \ge 100$. It outperforms the original SPARC scheme by about $0.5$~dB for large $\E[\rv{K}_{\rm a}]$. 

In Fig.~\ref{fig:Pe_vs_EbN0}, we plot the bounds on the \gls{MD} and \gls{FA} probabilities in Theorem~\ref{thm:RCU_unknownKa} (with ML estimation of $\rv{K}_{\rm a}$) as a function of $\EbNo$ for different decoding radii. We observe that decoding with a small radius performs better in the low $\EbNo$ regime, where noise overfitting is the  bottleneck. Increasing the decoding radius improves the performance in the moderate and high $\EbNo$ regime, where setting $r=0$ results in a high error floor due to the initial \glspl{MD} and \glspl{FA}. The error floor can be characterized analytically (see Appendix~\ref{app:error_floor}).
\begin{figure}[t!]
	\centering
	\definecolor{magenta}{rgb}{1.00000,0.00000,1.00000}%
	\begin{tikzpicture}
		\tikzstyle{every node}=[font=\scriptsize]
		\begin{axis}[%
			width=2.95in,
			height=1.8in,
			at={(0.759in,0.481in)},
			scale only axis,
			xmin=0,
			xmax=20,
			xlabel style={font=\color{black},yshift=1ex},
			xlabel={\footnotesize $\EbNo$ (dB)},
			ymin=7e-7,
			ymax=1,
			ymode = log,
			yminorticks=true,
			ytick={1e0, 1e-1, 1e-2, 1e-3, 1e-4, 1e-5, 1e-6},
			ylabel style={font=\color{black}, yshift=-1.8ex},
			ylabel={\footnotesize MD and FA probabilities},
			axis background/.style={fill=white},
			xmajorgrids,
			ymajorgrids,
			legend style={at={(0.99,0.99)}, anchor=north east, row sep=-2.5pt, legend cell align=left, align=left, draw=white!15!black}
			]			
			\addplot [color=blue,line width=0.5pt]
			table[row sep=crcr]{%
                   0   1.000000000000000 \\
0.500000000000000   0.146777707087554 \\
1.000000000000000   0.038501093391244 \\
1.500000000000000   0.025850383046193\\
2.000000000000000   0.021772071891965\\
2.500000000000000   0.018617775903273\\
3.000000000000000   0.016009283961793\\
3.500000000000000   0.013843186229793\\
4.000000000000000   0.012038715623758\\
4.500000000000000   0.010530615259555\\
5.000000000000000   0.009266057894810\\
5.500000000000000   0.008202197353721\\
6.000000000000000   0.007304145833348\\
7.000000000000000   0.005891122825509\\
8.000000000000000   0.004865075446055\\
9.000000000000000   0.004110837384321\\
10.000000000000000   0.003550069416826\\
11.000000000000000   0.003128846290549\\
12.000000000000000   0.002809571621660\\
13.000000000000000   0.002565675254348\\
14.000000000000000   0.002377690090178\\
15.000000000000000   0.002232321446917\\
16.000000000000000   0.002119370070190\\
17.000000000000000   0.002031256355991\\
18.000000000000000   0.001962289459717\\
19.000000000000000   0.001908159498677\\
20.000000000000000   0.001865577583167\\
			};
		\addlegendentry{$\epsilon_{\rm MD}$}
		
			\addplot [color=blue,dashed,line width=0.5pt]
			table[row sep=crcr]{%
                   0   1.000000000000000 \\
0.500000000000000   0.148101484266278 \\
1.000000000000000   0.038379904912468 \\
1.500000000000000   0.025609324195977 \\
2.000000000000000   0.021620294401635 \\
2.500000000000000   0.018549309949211 \\
3.000000000000000   0.015998992729046 \\
3.500000000000000   0.013872663930583 \\
4.000000000000000   0.012094931632125 \\
4.500000000000000   0.010604389871631 \\
5.000000000000000   0.009350962273943 \\
5.500000000000000   0.008293764145908 \\
6.000000000000000   0.007399234886732 \\
7.000000000000000   0.005987926360951 \\
8.000000000000000   0.004959840714961 \\
9.000000000000000   0.004202242566935 \\
10.000000000000000   0.003637935254935 \\
11.000000000000000   0.003213481211418 \\
12.000000000000000   0.002891445308809 \\
13.000000000000000   0.002645266730529 \\
14.000000000000000   0.002455441589217 \\
15.000000000000000   0.002308576012215 \\
16.000000000000000   0.002194423499029 \\
17.000000000000000   0.002105348035991 \\
18.000000000000000   0.002035612027232 \\
19.000000000000000   0.001980867561597 \\
20.000000000000000   0.001937795049910 \\
			};
		\addlegendentry{$\epsilon_{\rm FA}$}
		
		\addplot [color=black,dashdotted,line width=0.8pt]
		table[row sep=crcr]{%
			0   0.001704144946427 \\
			20 0.001704144946427 \\
		};
		\addlegendentry{Error floor}
	
		\addplot [color=black,dashdotted,line width=0.8pt]
		table[row sep=crcr]{%
			0  0.001774442973898 \\
			20 0.001774442973898 \\
		};
		
		\addplot [color=blue,line width=0.5pt]
		table[row sep=crcr]{%
                   0   1.000000000000000\\
0.500000000000000   1.000000000000000\\
1.000000000000000   1.000000000000000\\
1.500000000000000   1.000000000000000\\
2.000000000000000   0.079465507503007\\
2.500000000000000   0.033068587453746\\
3.000000000000000   0.007071139952214\\
3.500000000000000   0.004781463371322\\
4.000000000000000   0.003772564460892\\
4.500000000000000   0.003022039388262\\
5.000000000000000   0.002495748575896\\
5.500000000000000   0.001949281367915\\
6.000000000000000   0.001576417581834\\
7.000000000000000   0.001047338502548\\
8.000000000000000   0.000713831162347\\
9.000000000000000   0.000501282020659\\
10.000000000000000   0.000364090279862\\
11.000000000000000   0.000274299830336\\
12.000000000000000   0.000214431261303\\
13.000000000000000   0.000173702742435\\
14.000000000000000   0.000145309250922\\
15.000000000000000   0.000125189787585\\
16.000000000000000   0.000110685421613\\
17.000000000000000   0.000100057695447\\
18.000000000000000   0.000092159424436\\
19.000000000000000   0.000086218208996\\
20.000000000000000   0.000081703460546\\
		};
	
			\addplot [color=blue,dashed,line width=0.5pt]
	table[row sep=crcr]{%
                   0   1.000000000000000\\
0.500000000000000   1.000000000000000\\
1.000000000000000   1.000000000000000\\
1.500000000000000   1.000000000000000\\
2.000000000000000   0.388947507083411\\
2.500000000000000   0.220434978322149\\
3.000000000000000   0.111904745603824\\
3.500000000000000   0.096740896056475\\
4.000000000000000   0.089566128141456\\
4.500000000000000   0.083465134416361\\
5.000000000000000   0.003722992569717\\
5.500000000000000   0.002101844797902\\
6.000000000000000   0.001709628815437\\
7.000000000000000   0.001153906415675\\
8.000000000000000   0.000799477796390\\
9.000000000000000   0.000570967871549\\
10.000000000000000   0.000421525451198\\
11.000000000000000   0.000322480505532\\
12.000000000000000   0.000255637149437\\
13.000000000000000   0.000209649451649\\
14.000000000000000   0.000177328451787\\
15.000000000000000   0.000154178705677\\
16.000000000000000   0.000137362385795\\
17.000000000000000   0.000124961411813\\
18.000000000000000   0.000115696043535\\
19.000000000000000   0.000108695811219\\
20.000000000000000   0.000103357198913\\
	};

		\addplot [color=black,dashdotted,line width=0.8pt]
table[row sep=crcr]{%
	0  6.596366572455718e-05 \\
	20 6.596366572455718e-05 \\
};

\addplot [color=black,dashdotted,line width=0.8pt]
table[row sep=crcr]{%
	0 8.468424489119134e-05 \\
	20 8.468424489119134e-05 \\
};

		\addplot [color=blue,line width=0.5pt]
table[row sep=crcr]{%
                   0   1.000000000000000\\
0.500000000000000   1.000000000000000\\
1.000000000000000   1.000000000000000\\
1.500000000000000   1.000000000000000\\
2.000000000000000   0.371347651566337\\
2.500000000000000   0.104903277919702\\
3.000000000000000   0.026824695964395\\
3.500000000000000   0.003293703095084\\
4.000000000000000   0.000989809288806\\
4.500000000000000   0.000726657389976\\
5.000000000000000   0.000467791041960\\
5.500000000000000   0.000316023082713\\
6.000000000000000   0.000222432120851\\
7.000000000000000   0.000111613752367\\
8.000000000000000   0.000057888750466\\
9.000000000000000   0.000031379359686\\
10.000000000000000   0.000018024561585\\
11.000000000000000   0.000011056340908\\
12.000000000000000   0.000007251475433\\
13.000000000000000   0.000005059480403\\
14.000000000000000   0.000003745317576\\
15.000000000000000   0.000002916710074\\
16.000000000000000   0.000002375337933\\
17.000000000000000   0.000002009188516\\
18.000000000000000   0.000001754146781\\
19.000000000000000   0.000001572056363\\
20.000000000000000   0.000001438968208\\
};

\addplot [color=blue,dashed,line width=0.5pt]
table[row sep=crcr]{%
                   0   1.000000000000000\\
0.500000000000000   1.000000000000000\\
1.000000000000000   1.000000000000000\\
1.500000000000000   1.000000000000000\\
2.000000000000000   1.000000000000000\\
2.500000000000000   0.879722959591974\\
3.000000000000000   0.462610389296507\\
3.500000000000000   0.248376287959890\\
4.000000000000000   0.221089032429527\\
4.500000000000000   0.208364737070244\\
5.000000000000000   0.001001165755560\\
5.500000000000000   0.000383865642598\\
6.000000000000000   0.000275367786392\\
7.000000000000000   0.000144538657683\\
8.000000000000000   0.000078493487920\\
9.000000000000000   0.000044624176381\\
10.000000000000000   0.000026765448010\\
11.000000000000000   0.000017067693361\\
12.000000000000000   0.000011577483590\\
13.000000000000000   0.000008325764519\\
14.000000000000000   0.000006317145206\\
15.000000000000000   0.000005020374956\\
16.000000000000000   0.000004156985782\\
17.000000000000000   0.000003563938514\\
18.000000000000000   0.000003145616116\\
19.000000000000000   0.000002843890150\\
20.000000000000000   0.000002622481408\\
};

		\addplot [color=black,dashdotted,line width=0.8pt]
table[row sep=crcr]{%
	0  1.112629875990382e-06 \\
	20 1.112629875990382e-06 \\
};

\addplot [color=black,dashdotted,line width=0.8pt]
table[row sep=crcr]{%
	0 1.997909370360383e-06 \\
	20 1.997909370360383e-06 \\
};

		\draw [black] (191,-9.3) ellipse [x radius=2, y radius=.7];
		\draw [-latex] (180,-7.4) node [left] {Decoding radius $r = 1$} -- (190,-8.6);
		
		\draw [-latex] (180,-4.5) node [left] {Decoding radius $r = 0$} -- (190,-6.2);

		\draw [black] (191.5,-13.2) ellipse [x radius=2.5, y radius=.85];
		\draw [-latex] (180,-10.8) node [left] {Decoding radius $r = 2$} -- (191,-12.3);
		
		\end{axis}
	\end{tikzpicture}%
	\caption{The bounds on the \gls{MD} and \gls{FA} probabilities vs. $\EbNo$ for $k=128$ bits, $n = 19200$ channel uses, and $\rv{K}_{\rm a} \sim \mathrm{Poisson}(50)$.}
	\label{fig:Pe_vs_EbN0}
\end{figure}

\section{Conclusions} \label{sec:conclusions}
We proposed a formulation for massive uncoordinated access where both the identity and the number of active users are unknown. We derived a random-coding bound for the Gaussian multiple access channel that reveals a trade-off between misdetection and false alarm. Our bound \revisee{provides an estimate of} the penalty in terms of energy efficiency due to the lack of knowledge of the number of active users, and serves as a benchmark to assess the performance of practical schemes. Possible future works include extending our bound to the \gls{MAC} with fading and multiple antennas.

\section*{Acknowledgement}

This work has been supported by the Wallenberg AI, Autonomous Systems and Software Program (WASP).



\appendices
\section{Proof of Theorem~\ref{thm:RCU_unknownKa}} \label{app:proof}
 The following well-known results will be used in our proof.
 
\begin{lemma}[{Change of measure~\cite[Lemma~4]{Ohnishi2020novel}}] \label{lem:change_measure}
	Let $p$ and $q$ be two probability measures. Consider a random variable $\rv{x}$ supported on $\Hc$ and a function $f \colon \Hc \to [0,1]$. It holds that 
	\begin{align}
		\E_p[f(\rv{x})] \le \E_q[f(\rv{x})] + d_{\rm TV}(p,q)
	\end{align}
	where $d_{\rm TV}(p,q)$ denotes the total variation distance between $p$ and $q$.
\end{lemma}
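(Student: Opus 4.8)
The plan is to prove this by the elementary Hahn--Jordan route, which avoids invoking the existence of a maximal coupling and works for arbitrary probability measures. First I would fix a common dominating measure, for concreteness $\mu \defeq p+q$, and write $\tilde p \defeq dp/d\mu$ and $\tilde q \defeq dq/d\mu$ for the corresponding densities; then the difference of expectations can be written as a single integral, $\E_p[f(\rv x)] - \E_q[f(\rv x)] = \int_\Hc f(h)\big(\tilde p(h) - \tilde q(h)\big)\,d\mu(h)$.

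Next I would split this integral over the set $A \defeq \{h \in \Hc : \tilde p(h) \ge \tilde q(h)\}$ and its complement. On $A^{\rm c}$ the integrand is the product of the nonnegative factor $f(h)$ and the nonpositive factor $\tilde p(h) - \tilde q(h)$, hence is $\le 0$ and may simply be discarded in an upper bound. On $A$ I would use $0 \le f \le 1$ together with $\tilde p - \tilde q \ge 0$ to bound $\int_A f(h)\big(\tilde p(h) - \tilde q(h)\big)\,d\mu(h) \le \int_A \big(\tilde p(h) - \tilde q(h)\big)\,d\mu(h) = p(A) - q(A)$.

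It then remains to identify $p(A) - q(A)$ with $d_{\rm TV}(p,q)$. Since $A$ is precisely the maximizer of $E \mapsto p(E) - q(E)$ (for any measurable $E$, $p(E)-q(E) = \int_E(\tilde p - \tilde q)\,d\mu \le \int_A(\tilde p - \tilde q)\,d\mu$), we get $p(A) - q(A) = \sup_E\big(p(E) - q(E)\big) = d_{\rm TV}(p,q)$; equivalently, using $\int_\Hc(\tilde p - \tilde q)\,d\mu = 0$ one obtains $p(A) - q(A) = \tfrac12\int_\Hc |\tilde p - \tilde q|\,d\mu$, the density form of the total variation distance. Chaining the three displays gives $\E_p[f(\rv x)] - \E_q[f(\rv x)] \le d_{\rm TV}(p,q)$, which is the claim after rearranging. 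The argument is entirely routine; the only points that need attention are (i) matching whichever of the two standard conventions for $d_{\rm TV}$ is in force (they coincide, so the proof is convention-agnostic) and (ii) phrasing everything through the dominating measure $\mu$ so that no discreteness or absolute-continuity assumption on $p,q$ is required. A one-line alternative, if one is willing to cite the maximal-coupling characterization of $d_{\rm TV}$, is to take a coupling $(\rv x,\rv x')$ with marginals $p$ and $q$ and $\P[\rv x \ne \rv x'] = d_{\rm TV}(p,q)$ and write $\E_p[f(\rv x)] - \E_q[f(\rv x)] = \E\big[\big(f(\rv x) - f(\rv x')\big)\,\ind{\rv x \ne \rv x'}\big] \le \P[\rv x \ne \rv x']$, using $|f(\rv x) - f(\rv x')| \le 1$.
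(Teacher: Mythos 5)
Your proof is correct and complete: the Hahn--Jordan split over $A=\{\tilde p\ge \tilde q\}$, the bound $0\le f\le 1$ on $A$, and the identification $p(A)-q(A)=\sup_E\big(p(E)-q(E)\big)=d_{\rm TV}(p,q)$ are exactly the standard argument, and your coupling alternative is equally valid. Note that the paper does not prove this lemma at all --- it imports it by citation from the reference given in the lemma's title --- so there is no in-paper proof to compare against; your self-contained argument is the natural one and needs no changes.
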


\begin{lemma}[{Chernoff bound~\cite[Th. 6.2.7]{DeGroot2012ProbStats}}] \label{lem:Chernoff}
	For a random variable $\rv{x}$ with moment-generating function $\E[e^{t \rv{x}}]$ defined for all $|t| \le b$, it holds for all $\lambda \in [0,b]$ that
	\begin{align}
		\P[\rv{x} \le x] \le e^{\lambda x} \E[e^{-\lambda \rv{x}}].
	\end{align}
\end{lemma}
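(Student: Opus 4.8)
The plan is to derive this lower-tail bound directly from Markov's inequality applied to a suitable exponential transform of $\rv{x}$. The key observation is that, because $\lambda \ge 0$, the map $u \mapsto -\lambda u$ is nonincreasing while $t \mapsto e^{t}$ is strictly increasing, so the event $\{\rv{x} \le x\}$ can be rewritten as an event concerning a \emph{nonnegative} random variable whose expectation is precisely the (two-sided) moment-generating function evaluated at $-\lambda$.

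Concretely, I would first note that for any fixed $\lambda \ge 0$ the events $\{\rv{x} \le x\}$ and $\{-\lambda\rv{x} \ge -\lambda x\}$ coincide, and hence so do $\{\rv{x} \le x\}$ and $\{e^{-\lambda \rv{x}} \ge e^{-\lambda x}\}$, by monotonicity of the exponential. The random variable $e^{-\lambda \rv{x}}$ is nonnegative, and its expectation $\E[e^{-\lambda \rv{x}}]$ is finite: the hypothesis that the moment-generating function is defined on $|t| \le b$ together with $\lambda \in [0,b]$ guarantees that $-\lambda$ lies in $[-b,b]$, so the relevant expectation exists.

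Second, I would apply Markov's inequality to the nonnegative random variable $e^{-\lambda \rv{x}}$ with the strictly positive threshold $e^{-\lambda x}$, which yields
\begin{align}
	\P[\rv{x} \le x] = \P[e^{-\lambda \rv{x}} \ge e^{-\lambda x}] \le \frac{\E[e^{-\lambda \rv{x}}]}{e^{-\lambda x}} = e^{\lambda x}\,\E[e^{-\lambda \rv{x}}],
\end{align}
which is exactly the claimed inequality.

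There is essentially no substantive obstacle here; the proof is a one-line application of Markov's inequality. The only points that merit care are the role of the sign condition $\lambda \ge 0$, which is needed to preserve the direction of the inequality when passing from $\{\rv{x} \le x\}$ to $\{e^{-\lambda \rv{x}} \ge e^{-\lambda x}\}$, and the integrability of $e^{-\lambda \rv{x}}$, which the assumption $\lambda \le b$ secures. Since the bound holds for every admissible $\lambda$, one may subsequently optimize over $\lambda \in [0,b]$ to obtain the tightest estimate, but that refinement is not part of the statement as worded.
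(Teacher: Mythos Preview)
Your proof is correct and is the standard textbook derivation via Markov's inequality. The paper itself does not prove this lemma but simply cites it from \cite{DeGroot2012ProbStats}, so there is nothing further to compare.
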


\begin{lemma} [{Gallager's $\rho$-trick~\cite[p.~136]{Gallager1968information}}] \label{lem:Gallager}
	It holds that $\P[\cup_i A_i] \le (\sum_{i} \P[A_i])^\rho$ for every $\rho \in [0,1]$.
\end{lemma}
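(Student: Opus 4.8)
The plan is to derive the inequality by combining the elementary union bound with the trivial observation that a probability never exceeds one, and then exploiting the fact that the power map $x \mapsto x^\rho$ with $\rho \in [0,1]$ inflates arguments in $[0,1]$ while keeping arguments in $[1,\infty)$ at least one. This two-regime behavior is exactly what converts $\min\{1,\sum_i \P[A_i]\}$ into the desired upper bound $(\sum_i \P[A_i])^\rho$.

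First I would set $S = \sum_i \P[A_i]$ and record two basic facts. By the union bound, $\P[\cup_i A_i] \le S$; and by the axioms of probability, $\P[\cup_i A_i] \le 1$. Combining these gives $\P[\cup_i A_i] \le \min\{1,S\}$. Next I would show that $\min\{1,S\} \le S^\rho$ for every $S \ge 0$ and every $\rho \in [0,1]$ by a short case split. If $S \le 1$, then since $t \mapsto t^\rho$ satisfies $t^\rho \ge t$ on $[0,1]$ whenever $\rho \in [0,1]$, we have $\min\{1,S\} = S \le S^\rho$. If instead $S > 1$, then $S^\rho \ge 1 = \min\{1,S\}$, because raising a number exceeding one to a nonnegative power leaves it at least one. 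In either case $\min\{1,S\} \le S^\rho$, and chaining with the previous bound yields $\P[\cup_i A_i] \le S^\rho = \big(\sum_i \P[A_i]\big)^\rho$, as claimed.

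There is no genuine obstacle here, as the result is elementary; the only point worth flagging is that the bound $\P[\cup_i A_i] \le 1$ must be invoked \emph{alongside} the union bound rather than discarded. The union bound alone gives only $\P[\cup_i A_i] \le S$, which is strictly weaker than $S^\rho$ in precisely the regime $S > 1$, and it is exactly in that regime (many events, or events with large individual probabilities) that the $\rho$-trick is applied to tighten exponential error-probability estimates. Recognizing that the $\min\{1,S\}$ envelope is the correct quantity to bound is therefore the whole content of the argument.
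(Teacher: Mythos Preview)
Your proof is correct and is exactly the standard argument for Gallager's $\rho$-trick. The paper does not supply its own proof of this lemma; it merely states the result with a citation to \cite[p.~136]{Gallager1968information}, so there is nothing further to compare.
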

\begin{lemma} \label{lem:chi2}
	Let $\rvVec{x} \sim \Cc\Nc(\muv,\sigma^2\Id_n)$. For all $\gamma > -\frac{1}{\sigma^2}$, it holds that 
	\begin{align}
		\E[e^{-\gamma \|\rvVec{x}\|^2}] = (1+\gamma\sigma^2)^{-n} \exp\bigg(-\frac{\gamma\|\muv\|^2}{1+\gamma\sigma^2}\bigg). \label{eq:tmp363}
	\end{align} 
\end{lemma}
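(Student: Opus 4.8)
The plan is to prove Lemma~\ref{lem:chi2} by direct integration against the probability density of $\rvVec{x}$. Since $\rvVec{x}\sim\Cc\Nc(\muv,\sigma^2\Id_n)$ has density $p(\xv)=(\pi\sigma^2)^{-n}\exp\!\big(-\|\xv-\muv\|^2/\sigma^2\big)$ on $\CC^n$, I would write
\begin{equation*}
	\E[e^{-\gamma\|\rvVec{x}\|^2}]=\frac{1}{(\pi\sigma^2)^n}\int_{\CC^n}\exp\!\Big(-\gamma\|\xv\|^2-\tfrac{1}{\sigma^2}\|\xv-\muv\|^2\Big)\dif\xv ,
\end{equation*}
and expand the exponent as $-\big(\gamma+\sigma^{-2}\big)\|\xv\|^2+\tfrac{2}{\sigma^2}\Re[\xv^{\H}\muv]-\sigma^{-2}\|\muv\|^2$.

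Next I would set $\alpha=\gamma+\sigma^{-2}$ and observe that the hypothesis $\gamma>-\sigma^{-2}$ is exactly the condition $\alpha>0$, which makes the quadratic form coercive and the integral finite. Completing the square gives $-\alpha\|\xv\|^2+\tfrac{2}{\sigma^2}\Re[\xv^{\H}\muv]=-\alpha\|\xv-\muv/(\alpha\sigma^2)\|^2+\|\muv\|^2/(\alpha\sigma^4)$. Shifting the integration variable and using $\int_{\CC^n}e^{-\alpha\|\zv\|^2}\dif\zv=(\pi/\alpha)^n$ then produces the prefactor $(\pi\sigma^2)^{-n}(\pi/\alpha)^n=(\alpha\sigma^2)^{-n}=(1+\gamma\sigma^2)^{-n}$, which is the first factor in the claim.

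Finally, I would collect the constant terms left in the exponent: $\tfrac{\|\muv\|^2}{\alpha\sigma^4}-\tfrac{\|\muv\|^2}{\sigma^2}=\tfrac{\|\muv\|^2}{\sigma^2}\cdot\tfrac{1-\alpha\sigma^2}{\alpha\sigma^2}$, and since $\alpha\sigma^2=1+\gamma\sigma^2$ this simplifies to $-\gamma\|\muv\|^2/(1+\gamma\sigma^2)$, the exponent in the statement. An equally short alternative is to decompose $\rvVec{x}$ into its $2n$ independent real coordinates, each of variance $\sigma^2/2$, multiply the elementary real-scalar identities $\E[e^{-\gamma\rv{g}^2}]=(1+2\gamma v)^{-1/2}\exp\!\big(-\gamma m^2/(1+2\gamma v)\big)$ with $v=\sigma^2/2$, and use that the squared means sum to $\|\muv\|^2$. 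There is no genuine obstacle here: this is a standard Gaussian moment-generating-function computation, and the only points needing (minor) care are carrying the complex normalization $(\pi\sigma^2)^{-n}$ correctly through the change of variables and noting that $\gamma>-\sigma^{-2}$ is precisely what guarantees convergence of the Gaussian integral.
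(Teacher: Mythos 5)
Your computation is correct: completing the square in the exponent with $\alpha=\gamma+\sigma^{-2}$, noting that $\gamma>-\sigma^{-2}$ is exactly the coercivity/convergence condition, and collecting the prefactor $(\alpha\sigma^2)^{-n}=(1+\gamma\sigma^2)^{-n}$ and the residual exponent $-\gamma\|\muv\|^2/(1+\gamma\sigma^2)$ all check out, as does your alternative via the $2n$ independent real coordinates of variance $\sigma^2/2$. The paper itself states this lemma as a well-known fact and supplies no proof, so there is nothing to compare against; your direct Gaussian-integral verification is the standard argument one would write down, and it is complete.
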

	
We present next an error analysis of the random-coding scheme introduced in Section~\ref{sec:RCU}. 
Denote by $\Wc_0$ the set of misdetected messages, i.e., $\Wc_0 \defeq \Wc \setminus \widehat{\Wc}$, and by $\Wc_0'$ the set of falsely alarmed messages, i.e., $\Wc_0' \defeq \widehat{\Wc} \setminus \Wc$. The \gls{MD} and \gls{FA} probabilities, defined respectively in~\eqref{eq:eps_MD} and \eqref{eq:eps_FA}, can be expressed as the average fraction of misdetected and falsely alarmed messages, respectively, i.e., 
\begin{align}
	P_{\rm MD} &= \E[\ind{|\Wc| \ne 0} \cdot \md], \label{eq:pMD}\\
	P_{\rm FA} &= \E[\ind{|\widehat{\Wc}| \ne 0} \cdot \fa]. \label{eq:pFA}
\end{align}

\subsection{A Change of Measure} \label{sec:change_measure}
Recall that $|\Wc|$ is the number of {\em distinct} transmitted messages. Since multiple transmitters may pick the same codeword to transmit, $|\Wc|$ can be smaller than $\rv{K}_{\rm a}$. Since both $\ind{|\Wc| \ne 0} \cdot \frac{|\Wc_0|}{|\Wc|}$ and $\ind{|\widehat{\Wc}| \ne 0} \cdot \frac{|\Wc_0'|}{|\widehat{\Wc}|}$ are nonnegative and upper-bounded by one, we can apply Lemma~\ref{lem:change_measure} to these random quantities. Specifically, we replace the measure over which the expectation is taken by the one under which: 
i) there are at least $K_{l}$ and at most $K_{u} \ge \overline{K_{\rm a}'}$ active users, i.e., $K_{l}\le\rv{K}_{\rm a} \le K_{u}$; ii) $\widetilde{\rv{w}}_1,\dots,\widetilde{\rv{w}}_{\rv{K}_{\rm a}}$ are sampled uniformly without replacement from $[M]$, i.e., $|\Wc| = \rv{K}_{\rm a}$;  iii) $\rvVec{x}_i = \cv_{\rv{w}_i}, \forall i$ (instead of $\rvVec{x}_i = \cv_{\rv{w}_i} \ind{\|\cv_{\rv{w}_i}\|^2 \le nP}$). 

It then follows from \cite[Eq. (41)]{Kowshik2020fundamental} that the total variation between the true measure and the new one is
upper-bounded by $\P[{\rv{K}_{\rm a} \notin [K_{l}:K_u]}] + \P[ |\Wc| < \rv{K}_{\rm a}] + \P[\overline{U}]$, where
$U \defeq \{\|\cv_{\rv{w}_i}\|^2 \le nP, \forall i \in [\rv{K}_{\rm a}] \}$ and $\overline{U}$ denotes the complement of~$U$.
We compute these probabilities 
as follows:
\begin{itemize}[leftmargin=*]
	\item To compute the first probability, we simply use that $\P[{\rv{K}_{\rm a} \notin [K_{l}:K_u]}] = 1 - \sum_{K_{\rm a} = K_{l}}^{K_{u}}P_{\rv{K}_{\rm a}}(K_{\rm a})$.
	
	\item Consider a given $\rv{K}_{\rm a} = K_{\rm a}$. Since $\widetilde{\rv{w}}_1,\dots,\widetilde{\rv{w}}_{K_{\rm a}}$ are drawn uniformly and independently from $[M]$, there are $M^{K_{\rm a}}$ possible $K_{\rm a}$-tuples. Among them, $\frac{M!}{(M-K_{\rm a})!}$ tuples have nonduplicate elements. Therefore, $\P[ |\Wc| = K_{\rm a} \cond \rv{K}_{\rm a} = K_{\rm a}] = \frac{M!}{(M-K_{\rm a})!} \frac{1}{M^{K_{\rm a}}}$. As a consequence, $\P[ |\Wc| < \rv{K}_{\rm a}] = 1 - \P[ |\Wc| = \rv{K}_{\rm a}] = 1- \E_{\rv{K}_{\rm a}}\Big[\frac{M!}{M^{\rv{K}_{\rm a}}(M-\rv{K}_{\rm a})!}\Big]$.\footnote{In~\cite{PolyanskiyISIT2017massive_random_access}, $\P[ |\Wc| \le K_{\rm a}]$ is upper-bounded by $\binom{K_{\rm a}}{2}/M$ using the union bound.} 
	
	\item The probability $\P[\overline{U}]$ can be finally evaluated as
	\begin{align}
		\P[\overline{U}] &= \E_{\rv{K}_{\rm a}}\Bigg[{\P[\bigcup_{i=1}^{\rv{K}_{\rm a}} \|\cv_{\rv{w}_i}\|^2 > nP]}\Bigg] \\
		&\le \E_{\rv{K}_{\rm a}} \Bigg[\sum_{i=1}^{\rv{K}_{\rm a}}{\P[\|\cv_{\rv{w}_i}\|^2 > nP]}\Bigg] \label{eq:tmp675}\\
		&= \E[\rv{K}_{\rm a}]  \frac{\Gamma(n,nP/P')}{\Gamma(n)}, \label{eq:tmp676}
	\end{align}
	where \eqref{eq:tmp675} follows from the union bound and \eqref{eq:tmp676} holds since $\|\cv_{\rv{w}_i}\|^2$ follows the Gamma distribution with shape $n$ and scale $P'$.
\end{itemize}
From the above calculations, we deduce that the total variation between the two measures is upper-bounded by $p_0$ defined in~\eqref{eq:p0}.
Applying Lemma~\ref{lem:change_measure} to the random quantities $\ind{|\widehat{\Wc}| \ne 0} \cdot \frac{|\Wc_0|}{|\Wc|}$ and $\ind{|\Wc| \ne 0} \cdot \frac{|\Wc_0'|}{|\widehat{\Wc}|}$, we consider implicitly the new measure from now on at a cost of adding $p_0$ to their original expectations. It remains to bound the \gls{MD} and \gls{FA} probabilities given in~\eqref{eq:pMD} and \eqref{eq:pFA}, respectively, under the new measure. For the sake of clarity, in Appendix~\ref{sec:special_case}, we shall prove a bound on $P_{\rm MD}$ and $P_{\rm FA}$ for a special case where i) $\rv{K}_{\rm a}$ and $\rv{K}'_{\rm a}$ are fixed and $r = 0$, i.e., there are always $K_{\rm a}$ users transmitting and the decoder always outputs a list of size $K'_{\rm a}$; ii) $K'_{\rm a} < \min\{K_{\rm a}, M-K_{\rm a}\}$. Then, in Appendix~\ref{sec:general_case}, we shall show how to extend the proof for the general case where $\rv{K}_{\rm a}$ and $\rv{K}'_{\rm a}$ are random and $r \ge 0$.

\subsection{A Special Case} \label{sec:special_case}
In the aforementioned special case, \eqref{eq:eps_MD} and \eqref{eq:eps_FA} become   
\begin{align}
	\epsilon_{\rm MD} &= \sum_{t = 0}^{K_{\rm a}'}\frac{t+K_{\rm a}-K_{\rm a}'}{K_{\rm a}} \min\{p_{t,t},q_{t,t}\} + p_0, \label{eq:eps_MD_simp}\\
	\epsilon_{\rm FA} &= \sum_{t = 0}^{K_{\rm a}'}  \frac{t}{K'_{\rm a}} \min\{p_{t,t}, q_{t,t}\} + p_0, \label{eq:eps_FA_simp}
\end{align}		
where $p_{t,t}$ and $q_{t,t}$ will be derived next. We next show that $\epsilon_{\rm MD}$ and $\epsilon_{\rm FA}$ are indeed upper-bounds of $P_{\rm MD}$ and $P_{\rm FA}$, respectively, in this special case.

Observe that since the decoded list size $K'_{\rm a}$ is smaller than the number of transmitted messages $K_{\rm a}$, at least $K_{\rm a} - {K_{\rm a}'}$ messages are misdetected by default, and there can be $t \in [0:K_{\rm a}']$ additional \glspl{MD} occurring during the decoding process. Due to symmetry, we can assume without loss of generality that $\Wc = [K_{\rm a}]$ and that the list of messages that are initially misdetected due to insufficient decoded list size is $\Wc_{01} = [{K}_{\rm a} - {K_{\rm a}'}]$.\footnote{Note that due to user's identity ambiguity, this does not imply that the messages from a set of specific users are always misdetected.} Furthermore, let $\Wc_{02} = \Wc_{0} \setminus \Wc_{01}$ denote the set of $t$ additional \glspl{MD}.  Note that $\Wc_{02}$ is a generic subset of $[K_{\rm a} - {K_{\rm a}'} + 1:K_{\rm a}]$. Note also that $t$ is the number of \glspl{FA}, i.e., $|\Wc_0'| = t$. 
The set of transmitted messages can thus be expressed as $\Wc = \Wc_{01} \cup \Wc_{02} \cup (\Wc \setminus \Wc_0)$, and the received signal is $\rvVec{y} = c(\Wc_{01}) + c(\Wc_{02}) + c(\Wc \setminus \Wc_0) + \rvVec{z}$. Since the messages in $\Wc_{01}$ are always misdetected, the best approximation of $\Wc$ that the decoder can produce is $\Wc_{02} \cup (\Wc \setminus \Wc_0)$. However, under the considered error event $\Wc \to \widehat{\Wc}$, the actual decoded list is $\Wc_{0}' \cup (\Wc \setminus \Wc_0)$. Therefore, $\Wc \to \widehat{\Wc}$ implies that $\|\rvVec{y} - c(\Wc_{0}') - c(\Wc \setminus \Wc_0)\|^2 < \|\rvVec{y} - c(\Wc_{02}) - c(\Wc \setminus \Wc_0))\|^2$, which is equivalent to
\begin{align}
	\|c(\Wc_{01}) + c(\Wc_{02})- c(\Wc_{0}') + \rvVec{z}\|^2 < \|c(\Wc_{01}) + \rvVec{z}\|^2. \label{eq:eventF_simp}
\end{align}
We denote by $F(\Wc_{01},\Wc_{02},\Wc_{0}')$ the set of $\{\Wc_{01},\Wc_{02},\Wc_{0}'\}$ that satisfy~\eqref{eq:eventF_simp}.

We now compute the expectations in~\eqref{eq:pMD} and \eqref{eq:pFA}. 
Recall that, under assumptions just stated, we have $|\Wc_0| =  t + K_{\rm a} - K'_{\rm a}$, $|\Wc_0'| = |\Wc_{02}| = t$, and $|\widehat{\Wc}| = K'_{\rm a}$. 
It follows from \eqref{eq:pMD} and \eqref{eq:pFA} that, after the change of measure in Appendix~\ref{sec:change_measure}, $P_{\rm MD}$ and $P_{\rm FA}$ can be bounded as
\begin{align}
	P_{\rm MD} &\le \sum_{t=0}^{K'_{\rm a}} \frac{t+K_{\rm a}-{K_{\rm a}'}}{K_{\rm a}} \P[|\Wc_{02}| = t] + p_0,  \label{eq:tmp850_simp}\\
	P_{\rm FA} &\le \sum_{t=0}^{K'_{\rm a}} \frac{t}{K_{\rm a}'} \P[|\Wc_{02}| = t] + p_0, \label{eq:tmp853_simp}
\end{align} 
Next, we proceed to bound $\P[|\Wc_{02}| = t]$. 
This is done following two approaches. The first approach is based on error exponent analyses, resulting in the term $p_{t,t}$ in~\eqref{eq:eps_MD_simp}. The second approach is a variation of the dependence testing (DT) bound \cite[Th.~17]{Polyanskiy2010}, resulting in $q_{t,t}$ in~\eqref{eq:eps_MD_simp}.

\subsubsection{The Error-Exponent-Based Approach}  \label{sec:1st_approach}
By writing the event $|\Wc_{02}| = t$ as the union of the pairwise error events $F(\Wc_{01},\Wc_{02},\Wc_{0})$, we have that 
\begin{align}
	&\P[{|\Wc_{02}| = t}] \notag \\
	&= \P[\bigcup_{\Wc_{02} \subset [K_{\rm a} - {K_{\rm a}'} + 1:K_{\rm a}] \atop |\Wc_{02}| = t} \bigcup_{\Wc_{0}' \subset [K_{\rm a}+1:M] \atop |\Wc_{0}'| = t} \!F(\Wc_{01},\Wc_{02},\Wc_{0}')]\!, \label{eq:tmp901_simp} 
\end{align}
Next, given $c(\Wc_{01})$, $c(\Wc_{02})$, and $\rvVec{z}$, it holds for every $\lambda > -\frac{1}{tP'}$ that
\begin{align}
	&\P[F(\Wc_{01},\Wc_{02},\Wc_{0}')] \notag \\
	&\le e^{\lambda \|c(\Wc_{01}) + \|\rvVec{z}\|^2} \notag \\
	&\quad \cdot \E_{c(\Wc_{0}')}\Big[e^{-\lambda \|c(\Wc_{01}) + c(\Wc_{02})- c(\Wc_{0}') + \rvVec{z}\|^2}\Big] \label{eq:tmp766_simp}\\
	&= e^{\lambda \|c(\Wc_{01}) + \rvVec{z}\|^2} (1+\lambda tP')^{-n} \notag \\
	&\quad \cdot \exp\bigg(-\frac{\lambda\|c(\Wc_{01}) + c(\Wc_{02}) + \rvVec{z}\|^2}{1+\lambda t P'}\bigg), \label{eq:tmp768_simp}
\end{align}
where \eqref{eq:tmp766_simp} follows from the Chernoff bound in Lemma~\ref{lem:Chernoff}, and \eqref{eq:tmp768_simp} follows by computing the expectation in~\eqref{eq:tmp766_simp} using Lemma~\ref{lem:chi2}.
Next, we apply Gallager's $\rho$-trick in Lemma~\ref{lem:Gallager} to get that, given $c(\Wc_{01})$, $c(\Wc_{02})$, and $\rvVec{z}$, it holds for every $\rho \in [0,1]$ that
\begin{align}
	&\P[\bigcup_{\Wc_{0}' \subset [K_{\rm a}+1:M] \atop |\Wc_{0}'| = t} F(\Wc_{01},\Wc_{02},\Wc_{0}')] \\
	&\le \binom{M-K_{\rm a}}{t}^\rho (1+\lambda tP')^{-n\rho} \notag \\
	&\quad \cdot \exp\Bigg(\lambda \rho \bigg(\|c(\Wc_{01}) + \rvVec{z}\|^2 -\frac{\|c(\Wc_{01}) \!+\! c(\Wc_{02}) \!+\! \rvVec{z}\|^2}{1+\lambda t P'}\bigg)\Bigg). \label{eq:tmp803_simp}
\end{align}
Taking the expectation over $c(\Wc_{02})$ using Lemma~\ref{lem:chi2}, we obtain for given $c(\Wc_{01})$ and $\rvVec{z}$ that
\begin{align}
	&\P[\bigcup_{\Wc_{0}' \subset [K_{\rm a}+1:M] \atop |\Wc_{0}'| = t} F(\Wc_{01},\Wc_{02},\Wc_{0}')] \notag \\
	&\le \binom{M-K_{\rm a}}{t}^\rho (1\!+\!\lambda tP')^{-n\rho} \Big(1+\frac{\lambda \rho t P'}{1\!+\!\lambda tP'}\Big)^{-n} \notag \\
	&\quad \cdot \exp\Bigg(\!\lambda\rho \bigg(1-\frac{1}{1+\lambda P't(1+\rho)}\bigg)\|c(\Wc_{01}) + \rvVec{z}\|^2\Bigg) \\
	&= \binom{M-K_{\rm a}}{t}^\rho \exp\left(b_0\|c(\Wc_{01})  + \rvVec{z}\|^2 - na_0\right), \label{eq:tmp811}
\end{align}
where $a_0$ and $b_0$ are given by taking $t' = t$ in~\eqref{eq:a} and \eqref{eq:b}, respectively. 
Now applying Gallager's $\rho$-trick again, we obtain that, for every $\rho_1 \in [0,1]$,
\begin{align}
	&\P[\bigcup_{\Wc_{02} \subset [K_{\rm a} - {K_{\rm a}'} + 1:K_{\rm a}] \atop |\Wc_{02}| = t} \bigcup_{\Wc_{0}' \subset [K_{\rm a}+1:M] \atop |\Wc_{0}'| = t} F(\Wc_{01},\Wc_{02},\Wc_{0}')] \notag \\
	&\le \binom{K_{\rm a}'}{t}^{\rho_1} \binom{M-K_{\rm a}}{t}^{\rho\rho_1} \label{eq:tmp797_simp}\notag \\
	&\quad \cdot \E[\exp\left(\rho_1 b_0\|c(\Wc_{01}) + \rvVec{z}\|^2 - n\rho_1 a_0\right)] \\
	&= \binom{K_{\rm a}'}{t}^{\rho_1} \binom{M-K_{\rm a}}{t}^{\rho\rho_1} e^{-n\rho_1 a_0} \big(1-\rho_1P_2b_0\big)^{-n}, \label{eq:tmp800_simp}
\end{align}
where the last equality follows by computing the expectation in~\eqref{eq:tmp797_simp} jointly over $c(\Wc_{01})$ and $\rvVec{z}$ with the help of Lemma~\ref{lem:chi2}, and $P_2= 1+(K_{\rm a} - K_{\rm a}')P'$. Finally, plugging the result into \eqref{eq:tmp901_simp}, we obtain 
\begin{align}
	&\P[|\Wc_{02}| = t] \notag \\
	&\le \binom{K_{\rm a}'}{t}^{\rho_1} \binom{M-K_{\rm a}}{t}^{\rho\rho_1} e^{-n\rho_1 a_0} \big(1-\rho_1P_2b_0\big)^{-n} \\
	&\defeq p_{t,t}. \label{eq:tmp1148_simp}
\end{align}

\subsubsection{The DT-Based Approach} \label{sec:2nd_approach}
Next, we present an alternative bound on $\P[{|\Wc_{02}| = t}]$. Consider the channel law $P_{\rvVec{y} \cond c(\Wc_{0}), c(\Wc \setminus \Wc_0)}$ with input $c(\Wc_{0})$ and output $\rvVec{y}$ where $|\Wc_{02}| = t$. The corresponding information density~\cite[Def. 17.1]{Polyanskiy2019lecture} 
is given by
\begin{align}
	&\imath_t(c(\Wc_{0});\rvVec{y} \cond c(\Wc \setminus \Wc_0)) \notag \\
	&= n \ln(1+(t+K_a-K_{\rm a}')P') + \frac{\|\rvVec{y} - c(\Wc \setminus \Wc_0)\|^2}{1+(t+K_a-K_{\rm a}')P'} \notag \\
	&\quad - \|\rvVec{y} - c(\Wc_0) - c(\Wc \setminus \Wc_0)\|^2. 
\end{align}
Notice that the event $F(\Wc_{01},\Wc_{02},\Wc_{0}')$ defined in~\eqref{eq:eventF_simp} is equivalent to $\{\imath_t(c(\Wc_{0}');\rvVec{y} \cond c(\Wc \setminus \Wc_0)) > \imath_t(c(\Wc_{02});\rvVec{y} \cond c(\Wc \setminus \Wc_0))\}.$ 
Let
\begin{align} \label{eq:def_It_1_simp}
	\rv{I}_t \defeq \min_{\Wc_{02} \subset [K_{\rm a} - {K_{\rm a}'} + 1:K_{\rm a}] \atop |\Wc_{02}| = t} \imath_t(c(\Wc_{02});\rvVec{y} \cond c(\Wc \setminus \Wc_0)).
\end{align}
For a fixed arbitrary $\gamma$, it follows that
\begin{align}
	&\P[{|\Wc_{02}| = t}] \notag \\
	&=\P[I_{t} \le \gamma]\P[{|\Wc_{02}| = t \;\big|\; I_{t} \le \gamma}] \notag \\
	&\quad + \P[I_{t} > \gamma]\P[{|\Wc_{02}| = t \;\big|\; I_{t} > \gamma}] \\
	&\le \P[I_{t} \le \gamma] + \P[{|\Wc_{02}| = t \;\big|\; I_{t} > \gamma}] \label{eq:tmp838_simp}\\
	&= \P[I_{t} \le \gamma] \notag \\
	&\quad+ \P\bigg[ \bigcup_{\Wc_{02} \subset [K_{\rm a} - {K_{\rm a}'} + 1:K_{\rm a}] \atop |\Wc_{02}| = t} \bigcup_{\Wc_{0}' \subset [K_{\rm a}+1:M] \atop |\Wc_{0}'| = t} \bigg. \notag \\ &\qquad \qquad \bigg. \big\{\imath_t(c(\Wc_{0}');\rvVec{y} \cond c(\Wc \setminus \Wc_0)) \bigg. \notag \\ &\qquad \qquad \bigg. > \imath_t(c(\Wc_{02});\rvVec{y} \cond c(\Wc \setminus \Wc_0))\big\} \;\big|\; I_{t} > \gamma\bigg] \label{eq:tmp814_simp} \\
	&\le  \P[I_{t} \le \gamma] \notag \\
	&\quad+ \P\bigg[\bigcup_{\Wc_{02} \subset [K_{\rm a} - {K_{\rm a}'} + 1:K_{\rm a}] \atop |\Wc_{02}| = t} \bigcup_{\Wc_{0}' \subset [K_{\rm a}+1:M] \atop |\Wc_{0}'| = t} \bigg. \notag \\ &\qquad \qquad \bigg. \big\{\imath_t(c(\Wc_{0}');\rvVec{y} \cond c(\Wc \setminus \Wc_0)) > \gamma\big\}\bigg]. \label{eq:tmp383_simp}
\end{align}
Here, \eqref{eq:tmp814_simp} follows by writing explicitly the event $\{|\Wc_{02}| = t\}$, and \eqref{eq:tmp383_simp} follows by relaxing the inequality inside the second probability.
Using that $\P[\imath(x;\rv{y}) > \gamma] \le e^{-\gamma}, \forall x$~\cite[Cor.~17.1]{Polyanskiy2019lecture}, we obtain
\begin{align}
	\P[\imath_t(c(\Wc_{0}');\rvVec{y} \cond c(\Wc \setminus \Wc_0)) > \gamma] \le e^{-\gamma}.
\end{align}
Then, by applying the union bound and taking the infimum over $\gamma$, we conclude that
\begin{align}
	&\P[{|\Wc_{02}| = t}] \notag \\
	&\le \inf_{\gamma} \Bigg( \P[\rv{I}_t \le \gamma] +  \binom{K_{\rm a}'}{t}  \binom{M-K_{\rm a}}{t}  e^{-\gamma} \Bigg) \\
	&\defeq q_{t,t}. \label{eq:tmp1201_simp}
\end{align}
This concludes the DT-based approach.

It follows from \eqref{eq:tmp1148_simp} and \eqref{eq:tmp1201_simp} that
$\P[|\Wc_{02}| = t] \le \min\left\{p_{t,t}, q_{t,t} \right\}$.
Introducing this bound into \eqref{eq:tmp850_simp} and \eqref{eq:tmp853_simp}, we obtain that the \gls{MD} and \gls{FA} probabilities, averaged over the Gaussian codebook ensemble, are upper-bounded by $\epsilon_{\rm MD}$ and $\epsilon_{\rm FA}$ given in~\eqref{eq:eps_MD_simp} and \eqref{eq:eps_FA_simp}, respectively. 


\subsection{The General Case}  \label{sec:general_case}
We now explain how the result in the special case considered in the previous subsection can be extended to the general case where $\rv{K}_{\rm a}$ and $\rv{K}'_{\rm a}$ are random and $r \ge 0$. 

For random $\rv{K}_{\rm a}$ and $\rv{K}'_{\rm a}$, one has to take into account all the possible combinations of the number of transmitted messages and decoded messages when computing the expectations in~\eqref{eq:pMD} and \eqref{eq:pFA}. Consider the event that $K_{\rm a}$ users are active and the estimation of $K_{\rm a}$ results in $K_{\rm a}'$, which we denote by $K_{\rm a} \to K_{\rm a}'$. As in the special case, we assume without loss of generality that $\Wc = [{K}_{\rm a}]$. Furthermore, due to symmetry, we let $\Wc_{01} = [({K}_{\rm a} - \overline{K_{\rm a}'})^+]$ denote the list of $(\rv{K}_{\rm a} - \overline{K_{\rm a}'})^+$ initial \glspl{MD} due to insufficient decoded list size, and $\Wc_{02} = \Wc_0 \setminus \Wc_{01}$ the $t$ additional \glspl{MD} occurring during the decoding process.  
Note also that,
if $\underline{K_{\rm a}'} > K_{\rm a}$, the decoder always outputs more than $K_{\rm a}$ messages. Hence, at least $\underline{K_{\rm a}'} - K_{\rm a}$ decoded messages are falsely alarmed. Due to symmetry, let $\Wc_{01}' = [\rv{K}_{\rm a} + 1: \underline{K_{\rm a}'}]$ denote the list of $(\underline{K_{\rm a}'}-\rv{K}_{\rm a})^+$ initial \glspl{FA} due to excessive decoded list size, and $\Wc_{02}' = \Wc_0' \setminus \Wc_{01}'$ the $t'$ additional \glspl{FA} occurring during the decoding process. See Fig.~\ref{fig:venn} for a diagram depicting the relation between these sets of messages. Under these assumptions, $\Wc_{02}$ and $\Wc_{02}'$ are generic subsets of $[({K}_{\rm a} - \overline{K_{\rm a}'})^+ + 1:{K}_{\rm a}]$  and $[\max\{{K}_{\rm a},\underline{K_{\rm a}'}\}+1 : M]$, respectively.

Note that in the special case considered in Appendix~\ref{sec:special_case}, $t$ can take value from $0$ to $K'_{\rm a}$, and $t' = t$. In the general case, instead:
\begin{itemize}
	\item The possible values of $t$ are given by $\Tc$ defined in~\eqref{eq:T}. This is because the number of \glspl{MD}, given by $t+{(K_{\rm a}-\overline{K_{\rm a}'})}^+$, is upper-bounded by the total number of transmitted messages $K_{\rm a}$, and by $M-\underline{K_{\rm a}'}$ (since at least $\underline{K_{\rm a}'}$ messages are returned).
	
	\item Given $t$, the integer $t'$ takes value in $\overline{\Tc}_t$ defined in~\eqref{eq:Tbart} because: i) the decoded list size, given by $K_{\rm a} - t - {(K_{\rm a} - \overline{K_{\rm a}'})}^+ + t' + {(\underline{K_{\rm a}'}-K_{\rm a})}^+$, must be in $[\underline{K_{\rm a}'} : \overline{K_{\rm a}'}]$; ii) the number of \glspl{FA}, given by $t'+ (\underline{K_{\rm a}'}-K_{\rm a})^+$, is upper-bounded by the number of messages that are not transmitted $M-K_{\rm a}$, and by the maximal number of decoded messages $\overline{K_{\rm a}'}$.
	
	\item If the decoded list size is further required to be strictly positive, 
	then $t'$ takes value in $\Tc_t$ defined in~\eqref{eq:Tt}.
\end{itemize}

Using the above definitions, the best approximation of $\Wc$ that the decoder can produce is $\Wc_{02} \cup (\Wc \setminus \Wc_0) \cup \Wc_{01}'$, while the actual decoded list, under $\Wc \to \widehat{\Wc}$, is $\Wc'_{02} \cup (\Wc \setminus \Wc_0) \cup \Wc_{01}'$. Therefore, $\Wc \to \widehat{\Wc}$ implies that $\|\rvVec{y} - c(\Wc'_{02}) - c(\Wc \setminus \Wc_0) - c(\Wc_{01}')\|^2 < \|\rvVec{y} - c(\Wc_{02}) - c(\Wc \setminus \Wc_0) - c(\Wc_{01}')\|^2$, which is equivalent to
\begin{multline}
	\|c(\Wc_{01}) + c(\Wc_{02})- c(\Wc_{01}') - c(\Wc_{02}') + \rvVec{z}\|^2 \\
	 < \|c(\Wc_{01}) - c(\Wc_{01}') + \rvVec{z}\|^2. \label{eq:eventF}
\end{multline}
We denote by $F(\Wc_{01},\Wc_{02},\Wc_{01}',\Wc_{02}')$ the set of $\{\Wc_{01},\Wc_{02},\Wc_{01}',\Wc_{02}'\}$ that satisfy~\eqref{eq:eventF}.

We now compute the expectations in $P_{\rm MD}$ and $P_{\rm FA}$. 
Given $|\Wc_{02}| = t$ and $|\Wc_{02}'| = t'$, we have that $|\Wc_0| = t+(\rv{K}_{\rm a} - \overline{K_{\rm a}'})^+$, $|\Wc_0'| = t + (\underline{K_{\rm a}'}-\rv{K}_{\rm a})^+$, and $|\widehat{\Wc}| = \rv{K}_{\rm a} - t - (\rv{K}_{\rm a} - \overline{K_{\rm a}'})^+ + t' + (\underline{K_{\rm a}'}-\rv{K}_{\rm a})^+$. 
It follows from \eqref{eq:pMD} and \eqref{eq:pFA} that, after the change of measure in Appendix~\ref{sec:change_measure}, $P_{\rm MD}$ and $P_{\rm FA}$ can be bounded as
\begin{align}
	P_{\rm MD} &\le \sum_{K_{\rm a} =\max\{K_{l},1\}}^{K_{u}} P_{\rv{K}_{\rm a}}(K_{\rm a}) \sum_{K_{\rm a}' = K_{l}}^{K_{u}} \sum_{t\in \Tc}\frac{t+(K_{\rm a}-\overline{K_{\rm a}'})^+}{K_{\rm a}} \notag \\
	&\qquad \quad \cdot \P[|\Wc_{02}| = t, K_{\rm a} \to K_{\rm a}'] \notag \\
	&\quad + p_0,  \label{eq:tmp850}\\
	P_{\rm FA} &\le \sum_{K_{\rm a} =K_{l}}^{K_{u}} P_{\rv{K}_{\rm a}}(K_{\rm a}) \sum_{K_{\rm a}' = K_{l}}^{K_{u}}  \sum_{t\in \Tc} \sum_{t' \in \Tc_t} \notag \\
	&\qquad \quad \frac{t+(\underline{K_{\rm a}'} - K_{\rm a})^+}{K_{\rm a} \!-\! t \!-\! {(K_{\rm a} \!-\! \overline{K_{\rm a}'})}^+ \!\!+\! t' \!+\! {(\underline{K_{\rm a}'}\!-\!K_{\rm a})}^+\!} \notag \\
	&\qquad \quad \cdot \P[|\Wc_{02}| = t, |\Wc_{02}'| = t', K_{\rm a} \to K_{\rm a}'] \notag \\
	&\quad + p_0. \label{eq:tmp853}
\end{align} 

Next, we proceed to bound the joint probabilities $\P[|\Wc_{02}| = t,K_{\rm a} \to K_{\rm a}']$ and $\P[|\Wc_{02}| = t,|\Wc_{02}'| = t',K_{\rm a} \to K_{\rm a}']$. Let
\begin{align} \label{eq:def_A}
	A(K_{\rm a},K_{\rm a}') \defeq 
	\{m(\rvVec{y},K_{\rm a}') < m(\rvVec{y},K), \forall K \ne K_{\rm a}'\}.
\end{align}
Since the event $K_{\rm a} \to K_{\rm a}'$ implies that $|\widehat{\Wc}| \in [\underline{K_{\rm a}'}:\overline{K_{\rm a}'}]$ and $A(K_{\rm a},K_{\rm a}')$,
we have 
\begin{align}
	&\P[|\Wc_{02}| = t, K_{\rm a} \to K_{\rm a}'] \notag \\
	&\le \P[{{|\Wc_{02}| = t, |\widehat{\Wc}|\in [\underline{K_{\rm a}'}:\overline{K_{\rm a}'}}],A(K_{\rm a},K_{\rm a}')}] \\
	&\le \min\left\{\P[{|\Wc_{02}| \!=\! t, |\widehat{\Wc}|\in [\underline{K_{\rm a}'}:\overline{K_{\rm a}'}]}],  \P[A(K_{\rm a},K_{\rm a}')] \right\}, \label{eq:tmp883}
\end{align}
where \eqref{eq:tmp883} follows from the fact that the joint probability is upper-bounded by each of the individual probabilities. 
Similarly, we can show that
\begin{align}
	&\P[|\Wc_{02}| = t,|\Wc_{02}'| = t', K_{\rm a} \to K_{\rm a}'] \notag \\
	&\le \min\Big\{\P[{|\Wc_{02}| = t, |\Wc_{02}'| = t', |\widehat{\Wc}|\in [\underline{K_{\rm a}'}:\overline{K_{\rm a}'}]}], \Big. \notag \\
	&\qquad\qquad\Big.   \P[A(K_{\rm a},K_{\rm a}')] \Big\}. \label{eq:tmp1054}
\end{align}
We next present the bounds on $\P[A(K_{\rm a},K_{\rm a}')]$, $\P[{|\Wc_{02}| \!=\! t,  |\widehat{\Wc}| \in [\underline{K_{\rm a}'}:\overline{K_{\rm a}'}]}]$, and $\P[{|\Wc_{02}| = t, |\Wc_{02}'| = t',  |\widehat{\Wc}| \in [\underline{K_{\rm a}'}:\overline{K_{\rm a}'}]}]$.

\subsubsection{Bound on $\P[A(K_{\rm a},K_{\rm a}')]$}
We have
\begin{align}
	\P[A(K_{\rm a},K_{\rm a}')] 
	&= \P[m(\rvVec{y},K_{\rm a}') < m(\rvVec{y},K), \forall K \ne K_{\rm a}'] \\ 
	&\le\min_{K\colon K \ne K_{\rm a}'}\P[m(\rvVec{y},K_{\rm a}') < m(\rvVec{y},K)] \\
	&= \xi(K_{\rm a},K_{\rm a}'), \label{eq:tmp1077}
\end{align}
where $\xi(K_{\rm a},K_{\rm a}')$ is given by~\eqref{eq:xi}, and \eqref{eq:tmp1077} holds since under the new measure, $\rvVec{y} \sim \Cc\Nc(\mathbf{0},(1+K_{\rm a} P')\Id_n)$ distribution. 

\subsubsection{Bounds of $\P[{|\Wc_{02}| = t,  |\widehat{\Wc}| \in [\underline{K_{\rm a}'}:\overline{K_{\rm a}'}]}]$} \label{sec:bound_tMDs}
As in Appendix~\ref{sec:special_case}, we follow two approaches to bound $\P[{|\Wc_{02}| = t,  |\widehat{\Wc}| \in [\underline{K_{\rm a}'}:\overline{K_{\rm a}'}]}]$. The first approach is based on error exponent analyses and the second approach is based on the DT bound. In the first approach, we write the event $\{|\Wc_{02}| = t, |\widehat{\Wc}| \in [\underline{K_{\rm a}'}:\overline{K_{\rm a}'}]\}$ as the union of the pairwise events and obtain
\begin{align}
	&\P[{|\Wc_{02}| = t, |\widehat{\Wc}| \in [\underline{K_{\rm a}'}:\overline{K_{\rm a}'}]}] \notag \\
	&= \P\Bigg(\bigcup_{t' \in \overline{\Tc}_t} \bigcup_{\Wc_{02} \subset [(K_{\rm a} - \overline{K_{\rm a}'})^+ + 1:K_{\rm a}] \atop |\Wc_{02}| = t}
	\bigcup_{\Wc_{02}' \subset [\max\{K_{\rm a},\underline{K_{\rm a}'}\}+1:M] \atop |\Wc_{02}'| = t'} \Bigg. \notag \\
	&\qquad \qquad \Bigg.F(\Wc_{01},\Wc_{02},\Wc_{01}',\Wc_{02}') \Bigg). \label{eq:tmp901} 
\end{align}
Then, by applying the Chernoff bound, Gallager's $\rho$-trick, and Lemma~\ref{lem:chi2}  following similar steps as in Appendix~\ref{sec:1st_approach}, we obtain
\begin{align}
	\P[{|\Wc_{02}| = t, |\widehat{\Wc}| \in [\underline{K_{\rm a}'}:\overline{K_{\rm a}'}]}] \le p_t \label{eq:tmp1148}
\end{align}
 with $p_t$ given by~\eqref{eq:pt}.  In the second approach, we consider the channel law $P_{\rvVec{y} \cond c(\Wc_{0}), c(\Wc \setminus \Wc_0)}$ with input $c(\Wc_{0})$ and output $\rvVec{y}$ where $|\Wc_{02}| = t$. The corresponding information density $\imath_t(c(\Wc_{0});\rvVec{y} \cond c(\Wc \setminus \Wc_0))$ is defined in~\eqref{eq:infor_den}. Notice that the event $F(\Wc_{01},\Wc_{02},\Wc_{01}',\Wc_{02}')$ defined in~\eqref{eq:eventF} is equivalent to $\{\imath_t(c(\Wc_{01}')+c(\Wc_{02}');\rvVec{y} \cond c(\Wc \setminus \Wc_0)) > \imath_t(c(\Wc_{01}') + c(\Wc_{02});\rvVec{y} \cond c(\Wc \setminus \Wc_0))\}.$ 
Then, by proceeding as in Appendix~\ref{sec:2nd_approach}, it follows that 
\begin{align}
	\P[{|\Wc_{02}| = t, |\widehat{\Wc}| \in [\underline{K_{\rm a}'}:\overline{K_{\rm a}'}]}] \le q_t \label{eq:tmp1201}
\end{align}
 with $q_t$ given by~\eqref{eq:qt}.

\subsubsection{Bounds of $\P[{|\Wc_{02}| = t, |\Wc_{02}'| = t', |\widehat{\Wc}| \in [\underline{K_{\rm a}'}:\overline{K_{\rm a}'}]}]$} 
First, we have that 
\begin{align}
	&\P[{|\Wc_{02}| = t, |\Wc_{02}'| = t', |\widehat{\Wc}| \in [\underline{K_{\rm a}'}:\overline{K_{\rm a}'}]}] \notag \\
	&= \P\Bigg[\bigcup_{\Wc_{02} \subset [(K_{\rm a} - \overline{K_{\rm a}'})^+ + 1:K_{\rm a}] \atop |\Wc_{02}| = t} 
	\bigcup_{\Wc_{02}' \subset [\max\{K_{\rm a},\underline{K_{\rm a}'}\}+1:M] \atop |\Wc_{02}| = t'} \bigg.\notag \\
	&\qquad \quad \bigg. F(\Wc_{01},\Wc_{02},\Wc_{01}',\Wc_{02}')\Bigg]. \label{eq:tmp365}
\end{align}
Notice that  $\P[{|\Wc_{02}| = t, |\Wc_{02}'| = t', |\widehat{\Wc}| \in [\underline{K_{\rm a}'}:\overline{K_{\rm a}'}]}]$ differs from  $\P[{|\Wc_{02}| = t, |\widehat{\Wc}| \in [\underline{K_{\rm a}'}:\overline{K_{\rm a}'}]}]$ in~\eqref{eq:tmp901} only in the absence of the union $\bigcup_{t'\in \overline{\Tc}_t}$. By applying the Chernoff bound, Gallager's $\rho$-trick, and Lemma~\ref{lem:chi2} following similar steps as in Appendix~\ref{sec:1st_approach}, we obtain that
\begin{align}
	\P[{|\Wc_{02}| = t, |\Wc_{02}'| = t', |\widehat{\Wc}| \in [\underline{K_{\rm a}'}:\overline{K_{\rm a}'}]}] \le  p_{t,t'} \label{eq:tmp1217}
\end{align}
with $p_{t,t'}$ given by~\eqref{eq:ptt}.
Alternatively, bounding $\P\Big[|\Wc_{02}| = t, |\Wc_{02}'| = t', |\widehat{\Wc}| \in [\underline{K_{\rm a}'}:\overline{K_{\rm a}'}]\Big]$ using the information density's property as in Appendix~\ref{sec:2nd_approach}, we obtain 
\begin{align}
	\P[{|\Wc_{02}| = t, |\Wc_{02}'| = t', |\widehat{\Wc}| \in [\underline{K_{\rm a}'}:\overline{K_{\rm a}'}]}] \le q_{t,t'} \label{eq:tmp1226}
\end{align}
 with $p_{t,t'}$ given by~\eqref{eq:qtt}.

\vspace{.3cm}
From \eqref{eq:tmp883}, \eqref{eq:tmp1077}, \eqref{eq:tmp1148}, and \eqref{eq:tmp1201}, we obtain that 
\begin{align}
	\P[|\Wc_{02}| = t, K_{\rm a} \to K_{\rm a}'] \le \min\left\{p_t, q_t, \xi(K_{\rm a},K_{\rm a}') \right\}.
\end{align}
From \eqref{eq:tmp1054}, \eqref{eq:tmp1077}, \eqref{eq:tmp1217}, and \eqref{eq:tmp1226}, we obtain that 
\begin{align}
	&\P[|\Wc_{02}| = t, |\Wc_{02}'| = t', K_{\rm a} \to K_{\rm a}'] \notag \\ &\le \min\left\{p_{t,t'}, q_{t,t'}, \xi(K_{\rm a},K_{\rm a}') \right\}.
\end{align}

Substituting these bounds on $\P[|\Wc_{02}| = t, K_{\rm a} \to K_{\rm a}']$ and $\P[|\Wc_{02}| = t, |\Wc_{02}'| = t', K_{\rm a} \to K_{\rm a}']$ into \eqref{eq:tmp850} and \eqref{eq:tmp853}, we deduce that the \gls{MD} and \gls{FA} probabilities, averaged over the Gaussian codebook ensemble, are upper-bounded by $\epsilon_{\rm MD}$ and $\epsilon_{\rm FA}$ given in~\eqref{eq:eps_MD} and \eqref{eq:eps_FA}, respectively. Finally, proceeding as in \cite[Th.~19]{Polyanskiy2011feedback}, one can show that there exists a randomized coding strategy that achieves \eqref{eq:eps_MD} and \eqref{eq:eps_FA} and involves time-sharing among three deterministic codes.

\section{Proof of Proposition~\ref{prop:xi}}
	\label{proof:xi}
	The \gls{PDF} of $\rvVec{y}_0$ is given by
	\begin{align}
		p_{\rvVec{y}_0}(\yv_0) = \frac{1}{\pi^n (1+K_{\rm a} P')^n} \exp\left(-\frac{\|\yv_0\|^2}{1+K_{\rm a} P'}\right).
	\end{align}
	Therefore, with the \gls{ML} estimation of $\rv{K}_{\rm a}$, we have that
	\begin{align}
		m(\rvVec{y}_0,K) &= -\ln p_{\rvVec{y}_0}(\yv_0) \\
		&= \frac{\|\yv_0\|^2}{1+K P'} + n\ln(1+K P') + n \ln \pi.
	\end{align}
	As a consequence, the event $m\left(\rvVec{y}_0,K_{\rm a}'\right) < m\left(\rvVec{y}_0,K\right)$ can be written as $\frac{\|\rvVec{y}_0\|^2}{1+K_{\rm a}' P'} + n\ln(1+K_{\rm a}' P') < \frac{\|\rvVec{y}_0\|^2}{1+K P'} + n\ln(1+K P')$, or equivalently, 
	\begin{equation}
		\|\rvVec{y}_0\|^2 \left(\frac{1}{1+K_{\rm a}'P'} - \frac{1}{1+KP'}\right) < n \ln\left(\frac{1+KP'}{1+K_{\rm a}'P'}\right). \label{eq:eventKa}
	\end{equation}
	Using the fact that $\|\rvVec{y}_0\|^2$ follows a Gamma distribution with shape $n$ and scale $1+K_{\rm a} P'$, we deduce that $\xi(K_{\rm a},K_{\rm a}')$ is given by~\eqref{eq:xi_ML} 	with $
	\zeta(K,K_{\rm a},K_{\rm a}')$ given by~\eqref{eq:zeta_ML}.
	
	For the energy-based estimation with $m(\yv,K) = |\|\yv\|^2 - n(1 + KP')|$, after some manipulations, the event  $m\left(\rvVec{y}_0,K_{\rm a}'\right) < m\left(\rvVec{y}_0,K\right)$ is equivalent to
	\begin{align}
		\begin{cases}
			\|\rvVec{y}_0\|^2 > n\left(1 + \frac{K_{\rm a} + K_{\rm a}'}{2}P'\right), &\text{if~} K_{\rm a}' < K_{\rm a}, \\
			\|\rvVec{y}_0\|^2 < n\left(1 + \frac{K_{\rm a} + K_{\rm a}'}{2}P'\right), &\text{if~} K_{\rm a}' > K_{\rm a}. 
		\end{cases}
	\end{align}
	Thus, from the Gamma distribution of $\|\rvVec{y}_0\|^2$, we deduce that $\xi(K_{\rm a},K_{\rm a}')$ is given by~\eqref{eq:xi_ML} with $\zeta(K,K_{\rm a},K_{\rm a}')$ given by~\eqref{eq:zeta_energy}.

\section{Error Floor Analysis} \label{app:error_floor}
For the decoder considered in Theorem~\ref{thm:RCU_unknownKa}, the initial \glspl{MD} and \glspl{FA} are unavoidable. On the other hand, the additional \glspl{MD} and \glspl{FA} can be reduced as the power $P$ increases. 
As $P\!\to\! \infty$, by assuming that no additional \gls{MD} or \gls{FA} occurs on top of these initial \glspl{MD} or \glspl{FA}, we obtain lower bounds on $\epsilon_{\rm MD}$ and $\epsilon_{\rm FA}$ as follows. 
\begin{proposition}[Asymptotic lower bounds on $\epsilon_{\rm MD}$ and $\epsilon_{\rm FA}$]
	With ML or energy-based estimation of $\rv{K}_{\rm a}$, 
	it holds that
	\begin{align}
		&\lim_{P\to\infty} \epsilon_{\rm MD} \notag \\ &\ge \bar{\epsilon}_{\rm MD} \notag \\
		&= \!\sum_{K_{\rm a} =\max\{K_\ell,1\}}^{K_{u}} \!\bigg(\!P_{\rv{K}_{\rm a}}(K_{\rm a}) \!\sum_{K_{\rm a}' = K_\ell}^{K_{u}} \! \frac{(K_{\rm a}\!-\!\overline{K_{\rm a}'})^+\!}{K_{\rm a}} 
		{\xi}(K_{\rm a},K_{\rm a}') \! \bigg) \!+\! \bar{p},\! \label{eq:eps_MD_floor}\\
		&\lim_{P\to\infty} \epsilon_{\rm FA} \notag \\  &\ge \bar{\epsilon}_{\rm FA} \notag \\
		&= \!\sum_{K_{\rm a} =K_\ell}^{K_{u}} \!\bigg(\!P_{\rv{K}_{\rm a}}(K_{\rm a}) \notag \\ 		& \qquad \cdot 
		\sum_{K_{\rm a}' = K_\ell}^{K_{u}} \! \frac{(\underline{K_{\rm a}'}-K_{\rm a})^+}{K_{\rm a} \!-\! {(K_{\rm a} \!-\! \overline{K_{\rm a}'})}^+ \!+\! {(\underline{K_{\rm a}'}\!-\!K_{\rm a})}^+} 
		{\xi}(K_{\rm a},K_{\rm a}') \! \bigg) \!+\! \bar{p}, \label{eq:eps_FA_floor}
	\end{align}	
	where $\bar{p} = 2 - \sum_{K_{\rm a} = K_\ell}^{K_{u}}P_{\rv{K}_{\rm a}}(K_{\rm a}) - \E_{\rv{K}_{\rm a}}\left[\frac{M!}{M^{\rv{K}_{\rm a}}(M-\rv{K}_{\rm a})!} \right]$, and $\xi(K_{\rm a},K_{\rm a}')$ is given by~\eqref{eq:xi_ML} with
	$\zeta(K,K_{\rm a},K_{\rm a}') = n \ln\big(\frac{K}{K_{\rm a}'}\big) K_{\rm a}^{-1}\big(\frac{1}{K_{\rm a}'} - \frac{1}{K}\big)^{-1}$ for ML estimation of $\rv{K}_{\rm a}$ and $\zeta(K,K_{\rm a},K_{\rm a}') = n\frac{K+K_{\rm a}'}{2 K_{\rm a}}$ for energy-based estimation of $\rv{K}_{\rm a}$. 
\end{proposition}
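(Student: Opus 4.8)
The plan is to keep, in the exact formulas \eqref{eq:eps_MD} and \eqref{eq:eps_FA}, only the single summand corresponding to \emph{no additional} error — the term with $t=0$ for $\epsilon_{\rm MD}$ and the term with $t=t'=0$ for $\epsilon_{\rm FA}$ — discard all remaining (nonnegative) summands, and show that the retained one converges, as $P\to\infty$ with $P'$ chosen so that $P'\to\infty$ and $P/P'\to\infty$ (e.g.\ $P'=\sqrt P$), to the claimed value. Write $j_{\rm MD}=(K_{\rm a}-\overline{K_{\rm a}'})^+$ and $j_{\rm FA}=(\underline{K_{\rm a}'}-K_{\rm a})^+$ for the numbers of initial misdetections and false alarms, and let $\bar\xi(K_{\rm a},K_{\rm a}')$ denote the quantity obtained from \eqref{eq:xi_ML} on replacing $\zeta$ by the limiting value stated in the proposition. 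Three limits will be needed. First, $p_0\to\bar p$, which is immediate from \eqref{eq:p0} since the only $(P,P')$-dependent term there is $\E[\rv{K}_{\rm a}]\,\Gamma(n,nP/P')/\Gamma(n)$ and $\Gamma(n,x)\to0$ as $x\to\infty$. Second, $\xi(K_{\rm a},K_{\rm a}')\to\bar\xi(K_{\rm a},K_{\rm a}')$: letting $P'\to\infty$ in \eqref{eq:zeta_ML} and \eqref{eq:zeta_energy} yields exactly the limiting $\zeta$'s in the proposition, and $\gamma(n,\cdot),\Gamma(n,\cdot)$ are continuous while the minimum over $K$ in \eqref{eq:xi_ML} may be restricted to the finite set $[K_l:K_u]\setminus\{K_{\rm a}'\}$. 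Third, and this is the real work, the factor $\min\{p_t,q_t,\xi\}$ evaluated at $t=0$ (resp.\ $\min\{p_{t,t'},q_{t,t'},\xi\}$ at $t=t'=0$) must converge to $\bar\xi$ whenever its accompanying coefficient is nonzero, i.e.\ whenever $j_{\rm MD}\ge1$ (resp.\ $j_{\rm FA}\ge1$).

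For the third point, since $\min\{\cdot\}\le\xi\to\bar\xi$ it is enough to prove that the $p$ and $q$ factors each have $\liminf\ge1$. I would first check that the $p$ factor is \textbf{identically $1$}: evaluating \eqref{eq:a}, \eqref{eq:E0}, \eqref{eq:Ett} at $t=t'=0$ gives $a=0$, $b=0$, hence $E_0\equiv0$, $E(0,0)=0$, and $p_{0,0}=e^{-nE(0,0)}=1$. When $j_{\rm MD}\ge1$ one has $\overline{K_{\rm a}'}<K_{\rm a}$, so $(\overline{K_{\rm a}'}-K_{\rm a})^+=(\underline{K_{\rm a}'}-K_{\rm a})^+=0$, and inspecting \eqref{eq:Tbart} with the definition of $u_t$ shows $\overline{\Tc}_0=\{0\}$; thus the $t=0$ value of $p_t$ in \eqref{eq:pt} is the single term $p_{0,0}=1$, and for the FA summand $p_{0,0}$ appears directly and is again $1$. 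One also verifies that $t=0$ (resp.\ $(t,t')=(0,0)$) genuinely lies in $\Tc$ (resp.\ $\Tc\times\Tc_0$) using \eqref{eq:T} and \eqref{eq:Tt}. Note also that the initial errors are one-sided: $j_{\rm MD}\ge1$ forces $j_{\rm FA}=0$ and $\Wc_{01}'=\emptyset$, while $j_{\rm FA}\ge1$ forces $j_{\rm MD}=0$ and $\Wc_{01}=\emptyset$.

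The substantive step is the $q$ factor. Using $\overline{\Tc}_0=\{0\}$ (for MD) and $R_2=\tfrac1n\ln\binom{\min\{K_{\rm a},\overline{K_{\rm a}'}\}}{0}=0$, both the $t=0$ value of $q_t$ in \eqref{eq:qt} and $q_{0,0}$ in \eqref{eq:qtt} reduce to $\inf_\gamma\big(\P[\rv{I}_0\le\gamma]+e^{-\gamma}\big)$. I would then compute $\rv{I}_0$ explicitly from \eqref{eq:def_It}–\eqref{eq:infor_den}: in the MD case ($j_{\rm MD}\ge1$, so $\Wc_{01}'=\emptyset$) this gives $\rv{I}_0=n\ln(1+j_{\rm MD}P')-\|\rvVec{v}\|^2\,\tfrac{j_{\rm MD}P'}{1+j_{\rm MD}P'}$ with $\rvVec{v}=c(\Wc_{01})+\rvVec{z}\sim\Cc\Nc(\mathbf 0,(1+j_{\rm MD}P')\Id_n)$, while in the FA case ($j_{\rm FA}\ge1$, so $j_{\rm MD}=0$) it gives $\rv{I}_0=2\Re\langle\rvVec{z},c(\Wc_{01}')\rangle-\|c(\Wc_{01}')\|^2$. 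In both cases $\E[\rv{I}_0]=-nNP'+o(P')\to-\infty$, where $N$ is the number of initial errors, and, using that $\|\rvVec{v}\|^2/(1+j_{\rm MD}P')$ (resp.\ $\|c(\Wc_{01}')\|^2/(j_{\rm FA}P')$) is $\mathrm{Gamma}(n,1)$-distributed and hence sharply concentrated around $n$, one obtains $\P[\rv{I}_0\le\gamma]\to1$ for every fixed $\gamma$. An elementary two-regime argument then yields $\inf_\gamma(\P[\rv{I}_0\le\gamma]+e^{-\gamma})\to1$: for $\gamma\le-\ln(1-\varepsilon)$ the term $e^{-\gamma}\ge1-\varepsilon$, while for larger $\gamma$ the term $\P[\rv{I}_0\le\gamma]\ge\P[\rv{I}_0\le-\ln(1-\varepsilon)]\to1$. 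Hence $\liminf q\ge1$, and together with $p\equiv1$ and $\xi\to\bar\xi$ this gives $\min\{p,q,\xi\}\to\bar\xi$.

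Assembling: drop every summand in \eqref{eq:eps_MD} and \eqref{eq:eps_FA} except the $t=0$ (resp.\ $t=t'=0$) one, observe that the summands with $j_{\rm MD}=0$ (resp.\ $j_{\rm FA}=0$) vanish identically, pass to the limit in the finitely many remaining terms using the three facts above, and simplify the surviving coefficient — which, when $j_{\rm FA}\ge1$ so $j_{\rm MD}=0$, collapses to $(\underline{K_{\rm a}'}-K_{\rm a})/\underline{K_{\rm a}'}$, matching the denominator in \eqref{eq:eps_FA_floor}. This reproduces \eqref{eq:eps_MD_floor} and \eqref{eq:eps_FA_floor}. The hard part will be the last step of the $q$-factor analysis — ruling out that the relaxed DT-type bound collapses to $0$ at high power — which is exactly why one must produce the closed form of $\rv{I}_0$ and control its lower tail via Gamma concentration; everything else is either an exact evaluation at $t=t'=0$ or a continuity argument. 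Finally, I would remark that only the inequality ``$\ge$'' is obtained here; upgrading it to equality of the error floor would additionally require showing that every summand with $t\ge1$ (resp.\ $(t,t')\ne(0,0)$) vanishes as $P\to\infty$, which is true by the same high-power reliability of minimum-distance decoding but is not needed for the stated bound.
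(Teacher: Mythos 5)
Your proposal is correct and follows essentially the same route as the paper's proof: retain only the $t=0$ (resp.\ $t=t'=0$) summands, note that $a=b=0$ gives $p_{0,0}=1$, pass to the limit $P'\to\infty$ in $\xi$ and in the change-of-measure term, and discard the remaining nonnegative summands to obtain the lower bound. The only difference is that you supply the details of the step the paper merely asserts---that $q_0$ and $q_{0,0}$ converge to $1$---via the closed form of $\rv{I}_0$ and Gamma concentration, which is a welcome addition but not a different argument.
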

\begin{proof}
	First, the optimal value of $P'$ minimizing the bounds must grow with $P$ since otherwise $\tilde{p}$ will be large. Therefore, as $P\to \infty$, we can assume without loss of optimality that $P' \to \infty$. Next, when $t = t' = 0$, we can verify that $a = b = 0$, thus $E_0(\rho,\rho_1) = 0$ and $E(0,0) = 0$, achieved with $\rho = \rho_1 = 0$. Therefore, $p_0 = p_{0,0} = e^{-n\cdot 0} = 1$. We can also verify that $q_0$ and $q_{0,0}$ both converge to $1$ as $P' \to \infty$.
	When $P' \to \infty$, $\xi(K_{\rm a},K_{\rm a}')$ given in Proposition~\ref{prop:xi} converges to the right-hand side of \eqref{eq:xi_ML} with $\zeta(K,K_{\rm a},K_{\rm a}') = n \ln\big(\frac{K}{K_{\rm a}'}\big) K_{\rm a}^{-1}\big(\frac{1}{K_{\rm a}'} - \frac{1}{K}\big)^{-1}$ for ML estimation of $\rv{K}_{\rm a}$ and $\zeta(K,K_{\rm a},K_{\rm a}') = n\frac{K+K_{\rm a}'}{2 K_{\rm a}}$ for energy-based estimation of $\rv{K}_{\rm a}$. Furthermore, the last term in $\tilde{p}$ given by~\eqref{eq:p0} vanishes and thus $\tilde{p} \to \bar{p}$. Finally, the lower bounds $\bar{\epsilon}_{\rm MD}$ and $\bar{\epsilon}_{\rm FA}$ follows by substituting the asymptotic values of $p_0$, $q_0$, $p_{0,0}$, $q_{0,0}$, $\xi(K_{\rm a},K_{\rm a}')$, and $\tilde{p}$ computed above into $\epsilon_{\rm MD}$ and $\epsilon_{\rm FA}$, and by setting $\min\{p_t,q_t\}$ to zero for $t \ne 0$, and setting $\min\{p_{t,t'},q_{t,t'}\}$ to zero for $(t,t')\ne (0,0)$. 
%
%
%
%
	 
\end{proof}

We remark that the lower bounds in~\eqref{eq:eps_MD_floor} and~\eqref{eq:eps_FA_floor} are tight for typical IoT settings. Indeed, equalities in~\eqref{eq:eps_MD_floor} and~\eqref{eq:eps_FA_floor} hold if the probability of having additional \glspl{MD} and \glspl{FA} vanishes, i.e., $\min\{p_t,q_t\} \to 0$ for $t \ne 0$ and $\min\{p_{t,t'},p_{t,t'}\} \to 0$ for $(t,t') \ne (0,0)$ as $P\to\infty$. With $\rho = \rho_1 = 1$, 
the optimal $\lambda$ in~\eqref{eq:E0} is given by $\lambda \!=\! 1/(2P_2)$. Thus, by replacing the maximization over $\rho$ and $\rho_1$ in~\eqref{eq:Ett} with $\rho = \rho_1 = 1$, we obtain that $E(t,t') \ge -t' R_1 - R_2 + \ln\big(1+\frac{(t+t')P'}{4P_2}\big)$. It follows that 
\begin{align}
	p_{t,t'} &\le \binom{M\!-\!\max\{K_{\rm a}, \underline{K_{\rm a}'}\}}{t'} \binom{\min\{K_{\rm a}, \overline{K_{\rm a}'}\}}{t} \notag \\
	&\quad \cdot \left(1+\frac{(t+t')P'}{4P_2}\right)^{-n}.  \label{eq:bound_ptt} 
\end{align}
If $K_{\rm a} \in  [\underline{K_{\rm a}'}:\overline{K_{\rm a}'}]$, i.e., $P_2 = 1$, the right-hand side of~\eqref{eq:bound_ptt} vanishes as $P' \to\infty$. Otherwise, the right-hand side of~\eqref{eq:bound_ptt} converges to 
\begin{align}
\bar{p}_{t,t'} &= \binom{M-\max\{K_{\rm a}, \underline{K_{\rm a}'}\}}{t'} \binom{\min\{K_{\rm a}, \overline{K_{\rm a}'}\}}{t} \notag \\
&\quad \cdot \bigg(1+\frac{t+t'}{4((K_{\rm a} - \overline{K_{\rm a}'})^+ + (\underline{K_{\rm a}'} - K_{\rm a})^+)}\bigg)^{-n} \\
&\le M^{t'} K_{\rm a}^t \bigg(1+\frac{t+t'}{4((K_{\rm a} - \overline{K_{\rm a}'})^+ + (\underline{K_{\rm a}'} - K_{\rm a})^+)}\bigg)^{-n}.
\end{align}
Observe that $\bar{p}_{t,t'}$ is small if $n$ is relatively large compared to $\ln M$ and $\ln K_{\rm a}$, which is true for relevant values of $n,M$ and $K_{\rm a}$ in the IoT. Specifically, in typical IoT scenarios, $M$ and $K_{\rm a}$ are in the order of $10^2$, while $K_{\rm a}/n$ is from $10^{-4}$ to $10^{-3}$\textemdash see~\cite{PolyanskiyISIT2017massive_random_access} and \cite[Rem.~3]{Zadik2019}. For example, with $(M,n) = (2^{100}, 15000)$ and $K_{\rm a} \le 300$ as considered in~\cite{PolyanskiyISIT2017massive_random_access} and many follow-up works, assume that $(K_{\rm a} - \overline{K_{\rm a}'})^+ + (\underline{K_{\rm a}'} - K_{\rm a})^+ \le 20$, then $\bar{p}_{t,t'} < 10^{-128}$ for every $t\le 300$ and $t' \le 300$. 
As a consequence, $p_{t,t'}$ and $p_t$ are very small. We conclude that $\lim\limits_{P\to\infty} \epsilon_{\rm MD}$ and $\lim\limits_{P\to\infty} \epsilon_{\rm FA}$ approach closely $\bar{\epsilon}_{\rm MD}$ and $\bar{\epsilon}_{\rm FA}$, respectively. In other words, $\bar{\epsilon}_{\rm MD}$ and $\bar{\epsilon}_{\rm FA}$ essentially characterize the error floors of ${\epsilon}_{\rm MD}$ and ${\epsilon}_{\rm FA}$, respectively, as $P\to \infty$.

\bibliographystyle{IEEEtran}
\bibliography{IEEEabrv,./biblio}

\begin{thebibliography}{10}
\providecommand{\url}[1]{#1}
\csname url@samestyle\endcsname
\providecommand{\newblock}{\relax}
\providecommand{\bibinfo}[2]{#2}
\providecommand{\BIBentrySTDinterwordspacing}{\spaceskip=0pt\relax}
\providecommand{\BIBentryALTinterwordstretchfactor}{4}
\providecommand{\BIBentryALTinterwordspacing}{\spaceskip=\fontdimen2\font plus
\BIBentryALTinterwordstretchfactor\fontdimen3\font minus
  \fontdimen4\font\relax}
\providecommand{\BIBforeignlanguage}[2]{{%
\expandafter\ifx\csname l@#1\endcsname\relax
\typeout{** WARNING: IEEEtran.bst: No hyphenation pattern has been}%
\typeout{** loaded for the language `#1'. Using the pattern for}%
\typeout{** the default language instead.}%
\else
\language=\csname l@#1\endcsname
\fi
#2}}
\providecommand{\BIBdecl}{\relax}
\BIBdecl

\bibitem{Chen2020_massiveAccess}
X.~{Chen}, D.~W.~K. {Ng}, W.~{Yu}, E.~G. {Larsson}, N.~{Al-Dhahir}, and
  R.~{Schober}, ``Massive access for {5G} and beyond,'' \emph{IEEE J. Sel.
  Areas Commun.}, vol.~39, no.~3, pp. 615--637, Mar. 2021.

\bibitem{Wu2020_massiveAccess}
Y.~{Wu}, X.~{Gao}, S.~{Zhou}, W.~{Yang}, Y.~{Polyanskiy}, and G.~{Caire},
  ``Massive access for future wireless communication systems,'' \emph{{IEEE}
  Wireless Commun. Mag.}, vol.~27, no.~4, pp. 148--156, Oct. 2020.

\bibitem{PolyanskiyISIT2017massive_random_access}
Y.~{Polyanskiy}, ``A perspective on massive random-access,'' in \emph{Proc.
  {IEEE} Int. Symp. Inf. Theory {(ISIT)}}, Aachen, Germany, Jun. 2017, pp.
  2523--2527.

\bibitem{Berioli2016NOW}
M.~Berioli, G.~Cocco, G.~Liva, and A.~Munari, ``Modern random access
  protocols,'' \emph{Foundations and Trends in Networking}, vol.~10, no.~4, pp.
  317--446, Nov. 2016.

\bibitem{Ordentlich2017low_complexity_random_access}
O.~{Ordentlich} and Y.~{Polyanskiy}, ``Low complexity schemes for the random
  access {G}aussian channel,'' in \emph{Proc. {IEEE} Int. Symp. Inf. Theory
  {(ISIT)}}, Aachen, Germany, Jun. 2017, pp. 2528--2532.

\bibitem{Vem2019}
A.~{Vem}, K.~R. {Narayanan}, J.~{Chamberland}, and J.~{Cheng}, ``A
  user-independent successive interference cancellation based coding scheme for
  the unsourced random access {G}aussian channel,'' \emph{IEEE Trans. Commun},
  vol.~67, no.~12, pp. 8258--8272, Sep. 2019.

\bibitem{Fengler2019sparcs}
A.~Fengler, P.~Jung, and G.~Caire, ``{SPARCs} for unsourced random access,''
  \emph{IEEE Trans. Inf. Theory}, vol.~67, no.~10, pp. 6894--6915, Oct. 2021.

\bibitem{Amalladinne2020unsourced}
V.~K. Amalladinne, A.~K. Pradhan, C.~Rush, J.-F. Chamberland, and K.~R.
  Narayanan, ``Unsourced random access with coded compressed sensing:
  {I}ntegrating {AMP} and belief propagation,'' \emph{arXiv preprint
  arXiv:2010.04364}, Oct. 2020.

\bibitem{Amalladinne2020}
V.~K. {Amalladinne}, J.~F. {Chamberland}, and K.~R. {Narayanan}, ``A coded
  compressed sensing scheme for unsourced multiple access,'' \emph{IEEE Trans.
  Inf. Theory}, vol.~66, no.~10, pp. 6509--6533, Jul. 2020.

\bibitem{Pradhan2020}
A.~K. Pradhan, V.~K. Amalladinne, K.~R. Narayanan, and J.-F. Chamberland,
  ``Polar coding and random spreading for unsourced multiple access,'' in
  \emph{Proc. {IEEE} Int. Conf. Communications {(ICC)}}, Dublin, Ireland, Jun.
  2020.

\bibitem{Kowshik2020}
S.~S. {Kowshik}, K.~{Andreev}, A.~{Frolov}, and Y.~{Polyanskiy}, ``Energy
  efficient coded random access for the wireless uplink,'' \emph{IEEE Trans.
  Commun}, vol.~68, no.~8, pp. 4694--4708, Jun. 2020.

\bibitem{Fengler2019nonBayesian}
A.~Fengler, S.~Haghighatshoar, P.~Jung, and G.~Caire, ``Non-{B}ayesian activity
  detection, large-scale fading coefficient estimation, and unsourced random
  access with a massive {MIMO} receiver,'' \emph{IEEE Trans. Inf. Theory},
  vol.~67, no.~5, pp. 2925--2951, May 2021.

\bibitem{Stern2019}
K.~Stern, A.~E. Kal{\o}r, B.~Soret, and P.~Popovski, ``Massive random access
  with common alarm messages,'' in \emph{Proc. {IEEE} Int. Symp. Inf. Theory
  {(ISIT)}}, Paris, France, Jul. 2019.

\bibitem{Effros2018ISIT}
M.~{Effros}, V.~{Kostina}, and R.~C. {Yavas}, ``Random access channel coding in
  the finite blocklength regime,'' in \emph{Proc. {IEEE} Int. Symp. Inf. Theory
  {(ISIT)}}, Vail, Colorado, USA, Jun. 2018, pp. 1261--1265.

\bibitem{Liva2011IRSA}
G.~{Liva}, ``Graph-based analysis and optimization of contention resolution
  diversity slotted {ALOHA},'' \emph{IEEE Trans. Commun.}, vol.~59, no.~2, pp.
  477--487, Dec. 2011.

\bibitem{Decurninge2020}
A.~Decurninge, I.~Land, and M.~Guillaud, ``Tensor-based modulation for
  unsourced massive random access,'' \emph{IEEE Wireless Commun. Lett.},
  vol.~10, no.~3, pp. 552--556, Mar. 2021.

\bibitem{fengler2020pilot}
A.~Fengler, P.~Jung, and G.~Caire, ``Pilot-based unsourced random access with a
  massive {MIMO} receiver, {MRC} and polar codes,'' \emph{arXiv preprint
  arXiv:2012.03277}, 2020.

\bibitem{Gallager1968information}
R.~G. Gallager, \emph{Information theory and reliable communication}.\hskip 1em
  plus 0.5em minus 0.4em\relax New York, USA: Wiley, 1968.

\bibitem{Polyanskiy2019lecture}
\BIBentryALTinterwordspacing
Y.~Polyanskiy and Y.~Wu, ``Lecture notes on information theory,''
  \emph{Massachusetts Institute of Technology (6.441), University of Illinois
  Urbana-Champaign (ECE563), and Yale University (STAT 664)}, 2012-2017.
  [Online]. Available:
  \url{http://people.lids.mit.edu/yp/homepage/data/itlectures\_v5.pdf}
\BIBentrySTDinterwordspacing

\bibitem{Polyanskiy2011feedback}
Y.~{Polyanskiy}, H.~V. {Poor}, and S.~{Verdu}, ``Feedback in the non-asymptotic
  regime,'' \emph{IEEE Trans. Inf. Theory}, vol.~57, no.~8, pp. 4903--4925,
  Jul. 2011.

\bibitem{Ohnishi2020novel}
Y.~Ohnishi and J.~Honorio, ``Novel change of measure inequalities with
  applications to {PAC}-{B}ayesian bounds and {M}onte {C}arlo estimation,''
  \emph{arXiv preprint arXiv:2002.10678}, Feb. 2020.

\bibitem{DeGroot2012ProbStats}
M.~H. DeGroot and M.~J. Schervish, \emph{Probability and Statistics},
  4th~ed.\hskip 1em plus 0.5em minus 0.4em\relax Boston, MA, USA: Pearson,
  2012.

\bibitem{Kowshik2020fundamental}
S.~S. Kowshik and Y.~Polyanskiy, ``Fundamental limits of many-user {MAC} with
  finite payloads and fading,'' \emph{IEEE Trans. Inf. Theory}, vol.~67, no.~9,
  pp. 5853--5884, Sep. 2021.

\bibitem{Polyanskiy2010}
Y.~{Polyanskiy}, H.~V. {Poor}, and S.~{Verdu}, ``Channel coding rate in the
  finite blocklength regime,'' \emph{IEEE Trans. Inf. Theory}, vol.~56, no.~5,
  pp. 2307--2359, Apr. 2010.

\bibitem{Zadik2019}
I.~Zadik, Y.~Polyanskiy, and C.~Thrampoulidis, ``Improved bounds on {G}aussian
  {MAC} and sparse regression via {G}aussian inequalities,'' in \emph{Proc.
  {IEEE} Int. Symp. Inf. Theory {(ISIT)}}, Paris, France, Jul. 2019, pp.
  430--434.

\end{thebibliography}
\end{document}